\newcommand{\PP}{\mathbb{P}}
\newcommand{\EE}{\mathbb{E}}
\newcommand{\RR}{\mathbb{R}}
\newcommand{\R}{\mathbb{R}}
\newcommand{\dd}{\mathrm{d}}
\newcommand{\ind}{\boldsymbol{1}}
\newcommand{\bZ}{\boldsymbol{Z}}
\newcommand{\bD}{\boldsymbol{D}}
\newcommand{\br}{\boldsymbol{r}}
\newcommand{\bv}{\boldsymbol{v}}
\newcommand{\bx}{\boldsymbol{x}}
\newcommand{\by}{\boldsymbol{y}}
\newcommand{\bu}{\boldsymbol{u}}
\newcommand{\boldf}{\boldsymbol{f}}
\newcommand{\bs}{\boldsymbol{s}}
\newcommand{\bzeta}{\boldsymbol{\zeta}}
\newcommand{\btheta}{\boldsymbol{\theta}}
\newcommand{\bPsi}{\boldsymbol{\Psi}}
\newcommand{\bolde}{\boldsymbol{e}}
\newcommand{\bxi}{\boldsymbol{\xi}}
\newcommand{\origin}{\boldsymbol{0}}
\numberwithin{equation}{section}
\numberwithin{figure}{section}
\newcommand{\EL}{\mathcal{L}}
\newcommand{\deltain}{\delta_\text{in}}
\newcommand{\bzero}{\boldsymbol{0}}
\newcommand{\Din}{D^\text{in}}
\newcommand{\Dout}{D^\text{out}}
\newcommand{\convas}{\stackrel{\text{a.s.}}{\longrightarrow}}
\newcommand{\convp}{\stackrel{p}{\longrightarrow}}
\newtheorem{Theorem}{Theorem}
\newtheorem{Corollary}{Corollary}
\newtheorem{Proposition}{Proposition}
\newtheorem{Definition}{Definition}
\begin{document}

\title{Asymptotic Dependence of In- and Out-Degrees in a Preferential Attachment Model with Reciprocity
}

\titlerunning{Preferential Attachment with Reciprocity}        

\author{Tiandong Wang        \and
        Sidney I. Resnick 
}


\institute{Tiandong Wang \at
              Department of 
	Statistics, Texas 
	A\&M University, College 
		Station, TX 77843, U.S. \\
              \email{twang@stat.tamu.edu}           
           \and
           Sidney I. Resnick \at
	School of Operations Research and Information Engineering,
	 Cornell University, Ithaca, NY 14853, U.S. 
		\email{sir1@cornell.edu}
}

\date{Received: date / Accepted: date}

\maketitle

\begin{abstract}
Reciprocity characterizes the information exchange between users in a network, and some empirical studies have revealed that social networks have a high proportion of reciprocal edges. Classical directed preferential attachment (PA) models, though generating scale-free networks, 
may give networks with low reciprocity. This points out one potential problem of fitting a classical PA model to a given network dataset with high reciprocity, and indicates alternative models need to be considered. We give one possible modification of the classical PA model by including another parameter which controls the probability of adding a reciprocated edge at each step. 
Asymptotic analyses suggest that large in- and out-degrees become fully dependent in this modified model, as a result of the additional reciprocated edges.
\keywords{
Reciprocity \and multivariate regular variation \and asymptotic dependence \and preferential attachment
}
\subclass{MSC 
05C82 \and MSC 60F15 \and MSC 60G70\and MSC 62G32
}
\end{abstract}

\section{Introduction}
The \emph{reciprocity} coefficient, which is classically defined as the proportion of reciprocated edges (cf. \cite{newman:etal:2001, wasserman:faust:1994}), characterizes the mutual exchange of information between users in a social network.
For instance, on Twitter, users share links or pass on information to 
a target user through sending direct @-messages. When interacting with peers on Twitter, the direct @-messages will be exchanged between users in both directions (cf. \cite{chengetal:2011}). 
In addition, a study on eight different types of networks in \cite{jiangetal:2015} shows online social networks (e.g. \cite{facebook:2009, flickr:2009, java:etal:2007, googleplus:2012,social:2007}) tend to have a higher proportion of reciprocal edges, compared to other types of networks such as biological networks, communication networks, software call graphs and P2P networks. 

To model the evolution of social networks, one appealing approach is the directed preferential attachment (PA) model (cf. \cite{bollobas:borgs:chayes:riordan:2003, krapivsky:redner:2001}), which captures the scale-free property, i.e. both in- and out-degree distributions have Pareto-like tails
(cf. \cite{resnick:samorodnitsky:towsley:davis:willis:wan:2016, resnick:samorodnitsky:2015, wan:wang:davis:resnick:2017,wang:resnick:2019}). 
Asymptotic behaviors for the reciprocity coefficient in a directed PA model have recently been developed in \cite{wang:resnick:2021}.
We see from the theoretical results in \cite{wang:resnick:2021} that for certain choice of parameters, the classical PA model will generate networks with proportion of reciprocal edges close to 0, which deviates from the empirical findings for social networks described in \cite{jiangetal:2015}. 
This motivates us to think of variants of the classical PA model that are able to generate a high proportion of reciprocated edges. 

In \cite{das:resnick:2017}, the authors analyze the Facebook wall post dataset available at 
\url{http://konect.cc/networks/facebook-wosn-wall/}, and find that for the subset of nodes with large in- and out-degrees, the in- and out-degree pairs are modeled as
highly dependent.
Taking a closer look at the data, we see more than 60\% of the edges in the Facebook wall post network are reciprocal. These facts lead to the conjecture that 
nodes with large in- and out-degrees are likely to have the in- and out-degrees
highly dependent in a network with high reciprocity.
With this conjecture in mind, we extend the directed PA model by introducing another parameter, $\rho\in (0,1)$, which
controls the probability of adding a reciprocated edge to a newly created edge following the PA rule. 
We then study the theoretical properties of this modified PA model
with emphasis on the asymptotic dependence structure of large in- and out-degrees.
Theoretical results are obtained through extending the embedding techniques in \cite{wang:resnick:2018,wang:resnick:2019} to continuous-time multitype branching processes with immigration. 
{Empirical frequencies of nodes with in-degree $i$ and out-degree $j$
  converge to a limiting regularly varying distribution $p_{ij}$, and unlike the
  traditional PA model where the mass of the limit distribution is
  spread out in the first quadrant, for the model with reciprocity}
we find that large in- and out-degrees become asymptotically fully dependent.

In Section~\ref{sec:model}, we give a detailed description of our modified model with reciprocity probability $\rho\in (0,1)$. We then present important preliminaries in Section~\ref{sec:prelim}, including background knowledge on continuous-time multitype branching processes and multivariate regular variation.
Section~\ref{sec:res} contains three theoretical properties for the modified PA model: (1) the growth of in- and out-degrees for a fixed node; (2) the asymptotic limit for the joint in- and out-degree distribution; and (3) the asymptotic dependence structure for large in- and out-degrees.
We highlight some future research directions in Section~\ref{sec:discuss},
and technical proofs are given in Section~\ref{sec:proof}.

\subsection{Model Setup}\label{sec:model}
Initialize the model with graph $G(0)$, which consists of one node (labeled as Node 1) and a self-loop.
Let $G(n)$ denote the graph  after $n$ steps and
 $V(n)$ be the set of nodes in $G(n)$ with $V(0) = \{1\}$ and $|V(0)| = {1}$.
Denote the set of directed edges in $G(n)$ by $E(n)$ such that
an ordered pair $(w_1,w_2)\in E(n)$, $w_1,w_2\in V(n)$, 
represents a directed edge $w_1\mapsto w_2$. 
When $n=0$, we have $E(0) = \{(1,1)\}$.

Set $\bigl(\Din_w(n), \Dout_w(n)\bigr)$ to
be the in- and out-degrees of node $w\in V(n)$. 
We use the convention that $\Din_w(n) = \Dout_w(n) = 0$ if $w\notin V(n)$.
From $G(n)$ to $G(n+1)$, one of the following scenarios happens:
\begin{enumerate}
\item[(i)] With probability $\alpha$, 
we add a new node $|V(n)|+1$ with a directed edge $(|V(n)|+1,w)$, where $w\in V(n)$ is chosen with probability
\begin{equation}\label{eq:Din}
\frac{\Din_w(n)+\delta}{\sum_{w\in V(n)} (\Din_w(n)+\delta)}
= \frac{\Din_w(n)+\delta}{|E(n)|+\delta |V(n)|},
\end{equation}
and update the node set $V(n+1)=V(n)\cup\{|V(n)|+1\}$.
If, with probability $\rho\in (0,1)$, a reciprocal edge $(w, |V(n)|+1)$ is added, we
 update the edge set as
$E(n+1) = E(n)\cup \{(|V(n)|+1,w), (w,|V(n)|+1)\}$. 
If the reciprocal edge is not created, set $E(n+1) = E(n)\cup \{(|V(n)|+1,w)\}$. 
\item[(ii)] With probability $\gamma$, 
we add a new node $|V(n)|+1$ with a directed edge $(w,|V(n)|+1)$, where $w\in V(n)$ is chosen with probability
\begin{equation}\label{eq:Dout}
\frac{\Dout_w(n)+\delta}{\sum_{w\in V(n)} (\Dout_w(n)+\delta)}
= \frac{\Dout_w(n)+\delta}{|E(n)|+\delta |V(n)|},
\end{equation}
and update the node set $V(n+1)=V(n)\cup\{|V(n)|+1\}$.
If, with probability $\rho\in (0,1)$, a reciprocal edge $(|V(n)|+1,w)$ is added, we
update the edge set as 
$E(n+1) = E(n)\cup \{(|V(n)|+1,w), (w,|V(n)|+1)\}$. 
If the reciprocal edge is not created, set $E(n+1) = E(n)\cup \{(w,|V(n)|+1)\}$. 
\end{enumerate}
Here we assume the offset parameter $\delta\ge 0$, and takes the same value for both in- and out-degrees.
From the description above, we see that $|V(n)|=n+1$, and
as $n\to\infty$,
\[
\frac{|E(n)|}{n}\convas 1+\rho.
\]


\section{Preliminaries}\label{sec:prelim}
We start by summarizing important embedding techniques that lay the foundation for theoretical results in Section~\ref{sec:res}. Also, to prepare for analyses on the asymptotic structure, we provide useful definitions related to multivariate regular variation. 

\subsection{Embedding techniques}\label{subsec:embed}
In this section, we first provide important background knowledge on continuous-time multitype
branching processes with immigration. Then we give the embedding framework for the
in- and out-degree sequences in our modified PA model with reciprocity. These are the key ingredients to prove the asymptotic results in Section~\ref{sec:res}.


\subsubsection{Continuous-Time Multitype Branching Process}\label{subsec:mBI}
We start this section with the general setup of continuous-time multitype branching processes as introduced in \cite[Chapter V]{athreya:ney:1972}.
Consider a set of particles of $K$ types, and a type $i$ particle has an exponential lifetime with rate $a_i>0$, $i=1,\ldots,K$. Upon its death, this type $i$ particle produces $Y^{(i)}_j$ copies of type $j$ particles for $j=1,\ldots,K$. Here $(Y^{(i)}_1,\ldots, Y^{(i)}_K)\in \mathbb{N}^K$ is a random vector 
with finite expectations, $m_{i,j} := \EE\bigl(Y^{(i)}_j\bigr)$, $i,j=1,\ldots, K$, and 
with joint distribution 
\[
p^{(i)}(\br) := \PP\left(Y^{(i)}_{j}=r_j, j=1,\ldots, K\right),
\]
and 
generating function 
\begin{align*}
\boldf(\bs) := (f^{(1)}(\bs),\ldots, f^{(K)}(\bs)),
\end{align*}
where $\bs= (s_1,\ldots, s_K)\in [0,1]^K$, and
\begin{align*}
f^{(i)}(\bs) =\EE\left(\prod_{j=1}^K s_j^{Y^{(i)}_j}\right)
= \sum_{\br \in \mathbb{N}^K} p^{(i)}(\br)s_1^{r_1}\cdots s_K^{r_K},\qquad
i=1,\ldots,K.
\end{align*}
All particles live and produce independently from each other, and of the past.

Let $\psi^{(i)}(t)$ denote the number of type $i$ particles at time $t$, and set
\[
\bPsi(t) := (\psi^{(1)}(t),\ldots, \psi^{(K)}(t)),\qquad t\ge0.
\]
We assume that $\bPsi(0)$ is a $K$-dimensional random vector with distribution 
$p_0(\br)$, $\br\in \mathbb{N}^K$.
Following Chapter V.7.2 in \cite{athreya:ney:1972}, 
we define a matrix 
$A:=(a_{i,j})_{i,j=1,\ldots,K}$ with
\begin{align}\label{eq:matrixA}
a_{i,j} = a_j(m_{i,j}-\ind_{\{i=j\}}).
\end{align}
Suppose that all entries of $e^{t_0 A}$ are positive for some $t_0>0$, then 
by the Perron--Frobenius theorem (cf. \cite[Theorem V.2.1]{athreya:ney:1972}),
the matrix $A$ has a largest positive eigenvalue, $\lambda_1>0$, with multiplicity 1.
Let $\bv,\bu$ be the left and right eigenvectors of $\lambda_1$ respectively, with all coordinates strictly positive, and $\bu^T\ind =1$, $\bu^T\bv =1$. Then Theorem~2 in \cite[Chapter~V.7.5]{athreya:ney:1972} gives that there exists some non-negative random variable $W$ such that
\begin{align}\label{eq:conv_bxi}
e^{-\lambda_1t}\bPsi(t)\convas W\bv.
\end{align}
Also, $\PP(W>0)>0$ if $\EE\bigl(Y^{(i)}_j\log Y^{(i)}_j\bigr)<\infty$ for all $i,j=1,2,\ldots,K$.

In \cite{rabehasaina:2021}, the multitype branching process defined above is extended to \emph{a multitype branching process with immigration} (mBI process), which is similar to the construction of a birth process with immigration; see \cite{tavare:1987,wang:resnick:2018} for details.
Suppose that $N_\theta (t)$ is the counting function of homogeneous
Poisson points $0<\tau_1<\tau_2<\ldots$ with rate
$\theta>0$. Independent of this Poisson process, assume that we have
iid
copies of a multitype branching process $\{\bPsi_k(t):t\ge 0\}_{k\ge 0}$, with the same branching mechanism as the $\bPsi(\cdot)$ process defined in the previous paragraph. 
Assume also that $\bPsi_k(0)$ is a $K$-dimensional random vector with distribution $p_0(\br)$, $\br\in \mathbb{N}^K$.
Set $\tau_0=0$, and
at times $\tau_k, k\ge 1$, new particles (i.e. immigrants) arrive.
The vector process giving the number of each type at time $t$ is:
\begin{align}
\label{eq:mBI_dist}
\bxi_\theta(t) := \sum_{k=0}^\infty
\bPsi_k(t-\tau_k)\ind_{\{t\ge\tau_k\}}=\sum_{k=0}^{N_\theta(t) } \bPsi_k(t-\tau_k),
\end{align}
and we refer to the Poisson rate parameter $\theta$ as the {immigration parameter}.
Here the $\bxi_\theta(\cdot)$ is a Markov chain on $\mathbb{N}^2$.
\begin{Theorem}\label{thm:mBI}
Let $\bxi_\theta(\cdot)$ be a mBI process as given in \eqref{eq:mBI_dist}, then as $t\to\infty$, we have
\begin{align}\label{eq:conv_bxiImmi}
e^{-\lambda_1t}\bxi_\theta(t)\convas \sum_{k = 0}^\infty e^{-\lambda_1 \tau_k} W_k\bv,
\end{align}
where $\{W_k: k\ge 0\}$ are iid, satisfy $e^{-\lambda_1 t}\bPsi_k(t)\convas W_k \bv$, and are independent from $\{\tau_k:k\ge 0\}$.
\end{Theorem}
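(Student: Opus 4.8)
The plan is to reduce the convergence of the whole mBI process to the single-family convergence in \eqref{eq:conv_bxi} and then to control the resulting random series. Write $\Xi(t) := e^{-\lambda_1 t}\bxi_\theta(t)$ and, for each $k\ge0$, set $\eta_k(s):=e^{-\lambda_1 s}\bPsi_k(s)$, $s\ge0$, so that by \eqref{eq:conv_bxi} we have $\eta_k(s)\convas W_k\bv$ as $s\to\infty$, with $\{W_k\}$ iid and independent of $\{\tau_k\}$. Factoring $e^{-\lambda_1\tau_k}$ out of \eqref{eq:mBI_dist} gives
\begin{align*}
\Xi(t)=\sum_{k=0}^{N_\theta(t)}e^{-\lambda_1\tau_k}\,\eta_k(t-\tau_k).
\end{align*}
First I would check that the candidate limit is well defined: conditioning on the Poisson process, $\tau_k/k\convas1/\theta$ forces $\sum_k e^{-\lambda_1\tau_k}<\infty$ a.s., and since the $X\log X$ condition yields $\EE W<\infty$, the nonnegative series $\sum_k e^{-\lambda_1\tau_k}W_k$ converges a.s.\ to a finite limit, say $L$, so the claimed limit is $L\bv$.

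Next I would obtain the lower bound by exploiting nonnegativity. All coordinates of $\bPsi_k(\cdot)$ and of $\bv$ are nonnegative, so for any fixed $M$ and all $t>\tau_M$ we have $\Xi(t)\ge\sum_{k=0}^{M}e^{-\lambda_1\tau_k}\eta_k(t-\tau_k)$ coordinatewise. Letting $t\to\infty$ in this finite sum and using $\eta_k(t-\tau_k)\to W_k\bv$ gives $\liminf_{t\to\infty}\Xi(t)\ge\sum_{k=0}^{M}e^{-\lambda_1\tau_k}W_k\bv$ a.s., and then $M\to\infty$ yields $\liminf_{t\to\infty}\Xi(t)\ge L\bv$ a.s.

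The harder matching upper bound requires controlling the tail $R_M(t):=\sum_{k=M+1}^{N_\theta(t)}e^{-\lambda_1\tau_k}\eta_k(t-\tau_k)$ uniformly in $t$; this is the main obstacle, because the summands with $k$ close to $N_\theta(t)$ correspond to families that have evolved for only a short time $t-\tau_k$, so their normalized sizes $\eta_k(t-\tau_k)$ have not yet settled near $W_k\bv$. The key lemma I would establish is $\EE\bigl[\sup_{s\ge0}e^{-\lambda_1 s}\|\bPsi(s)\|\bigr]<\infty$: since $\bu$ has strictly positive entries, $\|\bPsi(s)\|$ is dominated by a constant multiple of $\langle\bu,\bPsi(s)\rangle$, and $e^{-\lambda_1 s}\langle\bu,\bPsi(s)\rangle$ is the nonnegative martingale associated with the Perron eigenvalue, so a Doob maximal inequality bounds the expected running supremum provided the offspring laws have finite second moments. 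Writing $\bar W_k:=\sup_{s\ge0}e^{-\lambda_1 s}\|\bPsi_k(s)\|$, this gives $\|R_M(t)\|\le\sum_{k=M+1}^{\infty}e^{-\lambda_1\tau_k}\bar W_k$, a bound independent of $t$; conditioning on $\{\tau_k\}$ and using $\EE\bar W_k<\infty$ together with $\sum_k e^{-\lambda_1\tau_k}<\infty$, the right-hand side is the tail of an a.s.-convergent series and hence tends to $0$ as $M\to\infty$.

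Combining the three pieces — the head $\sum_{k\le M}e^{-\lambda_1\tau_k}\eta_k(t-\tau_k)\to\sum_{k\le M}e^{-\lambda_1\tau_k}W_k\bv$, the vanishing process tail $R_M(t)$, and the vanishing limit tail $\sum_{k>M}e^{-\lambda_1\tau_k}W_k\bv$ — gives $\limsup_{t\to\infty}\Xi(t)\le L\bv$ a.s., matching the lower bound and completing the proof. An alternative route for the upper bound, avoiding the second-moment assumption, is to show that the projection $e^{-\lambda_1 t}\langle\bu,\bxi_\theta(t)\rangle$ converges a.s.\ by compensating the immigration drift to form an $L^1$-bounded martingale, and then to upgrade to the full vector limit using the already-established coordinatewise liminf together with a Fatou and equal-expectations argument.
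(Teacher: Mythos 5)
Your argument is correct, and it differs from the paper's proof in how the interchange of limit and infinite sum is justified. The paper first studies the scalar projection $e^{-\lambda_1 t}\,\bu\cdot\bxi_\theta(t)$: it shows this is a non-negative submartingale with bounded (in fact $L_2$-bounded) expectation, deduces a.s.\ convergence to some $I_0$, and then \emph{identifies} $I_0$ with $\sum_k e^{-\lambda_1\tau_k}W_k$ by computing $\EE(I_0)$ via uniform integrability and invoking Fatou to get the reverse inequality $I_0\ge \sum_k e^{-\lambda_1\tau_k}W_k$; the vector statement is then recovered using the bound $\psi_k^{(i)}/(\bu\cdot\bPsi_k)\le 1/u_i$. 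You instead prove the vector convergence directly by a dominated-convergence argument on the series: the integrable envelope $\bar W_k=\sup_{s\ge 0}e^{-\lambda_1 s}\|\bPsi_k(s)\|$, obtained from Doob's maximal inequality applied to the non-negative martingale $e^{-\lambda_1 s}\,\bu\cdot\bPsi_k(s)$ (together with positivity of $\bu$), makes the tail $R_M(t)$ small uniformly in $t$, and the head converges term by term. Both proofs ultimately rest on the same ingredients from Athreya--Ney (the Perron martingale and its $L_2$-boundedness, which presumes finite second moments of the offspring laws --- satisfied here since offspring numbers are bounded) and on $\EE\,e^{-\lambda_1\tau_k}=(\theta/(\theta+\lambda_1))^k$. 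What your route buys is a cleaner identification of the limit --- you never need the expectation-matching-plus-Fatou step, and the sandwich via $\liminf$ and $\limsup$ is in fact already implied by your head/tail estimate --- at the cost of introducing the maximal-inequality lemma, which the paper avoids. Two minor points: the $X\log X$ condition is relevant for $\PP(W>0)>0$, not for $\EE W<\infty$ (the latter follows from Fatou applied to the non-negative martingale regardless); and your ``alternative route'' sketched at the end is essentially the paper's actual proof.
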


By the independence between $\{W_k: k\ge 0\}$ and $\{\tau_k: k\ge 0\}$, we see that
\begin{align*}
\EE\left(\sum_{k = 0}^\infty e^{-\lambda_1 \tau_k} W_k\right)
&= \EE(W_0)\EE\left(\sum_{k = 0}^\infty e^{-\lambda_1 \tau_k}\right)
= \EE(W_0)\sum_{k = 0}^\infty \left(\frac{\theta}{\theta+\lambda_1}\right)^k <\infty,
\end{align*}
which implies the infinite series on the right hand side of \eqref{eq:conv_bxiImmi} converges a.s..
Note also that Theorem 3 in \cite{rabehasaina:2021} specifies the convergence in distribution in $\RR_+^K$ for the mBI process in \eqref{eq:conv_bxiImmi}, which is weaker than our results in Theorem~\ref{thm:mBI}.
The proof of Theorem~\ref{thm:mBI} is deferred to Section~\ref{sec:append_thm}.
In what follows, we will consider the $K=2$ case, and embed the in- and out-degree sequences into a sequence of two-type branching processes with immigration.

\subsubsection{Embedding}\label{subsubsec:embed}
We now introduce the embedding framework for the PA model with reciprocity, using 
mBI processes.
Let $\{\bxi_{k,\delta}(t) \equiv (\xi^{(1)}_{k,\delta}(t),\xi^{(2)}_{k,\delta}(t)): t\ge0\}_{k\ge 1}$ be a sequence of independent two-type mBI processes with life time parameters $a_1=\alpha, a_2=\gamma$,  immigration parameter $\delta\ge 0$, and offspring generating functions
\begin{align}
f^{(1)}(\bs) &= (1-\rho)s_1^2 + \rho s_1^2s_2 ,\label{eq:pgf1}\\
f^{(2)}(\bs) &= (1-\rho)s_2^2 + \rho s_1 s_2^2 ,\label{eq:pgf2}
\end{align}
for $\bs=(s_1,s_2)\in [0,1]^2$.
The initial value of every immigrant is a 2-dimensional random vector with distribution
\begin{align}
\label{eq:p0r}
p_0(\br) \equiv \left(\alpha(1-\rho)\right)^{\ind_{\{\br = (1,0)\}}}\left(\gamma(1-\rho)\right)^{\ind_{\{\br = (0,1)\}}}\rho^{\ind_{\{\br = (1,1)\}}}.
\end{align}
Initial values of $\{\bxi_{k,\delta}(t): t\ge0\}_{k\ge 1}$ will be specified during the construction process. 
Equation~\eqref{eq:pgf1} gives that at the end of the life time of a type 1 particle, with 
probability $1-\rho$, it will split into two type 1 particles, increasing the total number of
type 1 particles by 1.
With probability $\rho$, a type 1 particle will give birth to 2 type 1 particles and 1 type 2 particle upon its death, which increases the total numbers of type 1 and 2 particles both by 1. 
Similar interpretations also apply to \eqref{eq:pgf2}.
Later in Theorem~\ref{thm:embed_mBI}, we will mimic the evolution of in- and out-degree processes
of node $k$ using $\xi^{(1)}_{k,\delta}(\cdot)$ and $\xi^{(2)}_{k,\delta}(\cdot)$, respectively.

Conditioning on the current state $\bx\equiv (x_1,x_2)$,
the jump probabilities from $\bx$ to $\by$ is
\begin{align*}
P(\bx, \by) &= 
\begin{cases}
\frac{\alpha(1-\rho)(x_1+\delta)}{\alpha x_1+ \gamma x_2+\delta}, &\qquad \by=\bx+(1,0),\\
\frac{\gamma(1-\rho)(x_2+\delta)}{\alpha x_1+ \gamma x_2+\delta}, &\qquad \by=\bx+(0,1),\\
\frac{\alpha\rho(x_1+\delta)+\gamma\rho(x_2+\delta)}{\alpha x_1+ \gamma x_2+\delta}, &\qquad \by=\bx+(1,1).
\end{cases}
\end{align*}
Then we see that 
$\rho$ is the probability that when the first component jumps, it is due to both components making a jump.
Suppose that $\bxi_{1,\delta}(0)=(1,1)$, then we set $T_0=0$ and initiate $\bxi_{1,\delta}(\cdot)$ at $t=0$. Let $T_1$ be the first time when the $\bxi_{1,\delta}(\cdot)$ process jumps.
Then for $t\ge 0$, 
\begin{align*}
\PP\left(T_1>t\right)
&= \exp\left\{-t\left(\alpha\left(\bxi^{(1)}_{1,\delta}(0)+\delta\right)+\gamma\left(\bxi^{(1)}_{1,\delta}(0)+\delta\right)\right)\right\}
=  e^{-t(1+\delta)}.
\end{align*}
Hence, $T_1$ follows an exponential distribution with rate $1+\delta$.
We then initiate the process, $\{\bxi_{2,\delta}(t-T_1):t\ge T_1\}$, with one of the three initial values below:
\begin{enumerate}
\item[(i)] If the $\bxi_{1,\delta}(\cdot)$ is increased by $(1,0)$, set $\bxi_{2,\delta}(0)=(0,1)$.
\item[(ii)] If the $\bxi_{1,\delta}(\cdot)$ is increased by $(0,1)$, set $\bxi_{2,\delta}(0)=(1,0)$.
\item[(iii)] If the $\bxi_{1,\delta}(\cdot)$ is increased by $(1,1)$, set $\bxi_{2,\delta}(0)=(1,1)$.
\end{enumerate}
Therefore, $\bxi_{2,\delta}(0)$ is a 2-dimensional random vector with generating function
\begin{align}\label{eq:pgf_xi0}
\EE\left(s_1^{\xi^{(1)}_{2,\delta}(0)}s_2^{\xi^{(1)}_{2,\delta}(0)}\right)
= \alpha(1-\rho)s_2+\gamma(1-\rho)s_1+\rho s_1s_2,\qquad s_1,s_2\in [0,1].
\end{align}
Define $R_1:= \ind_{\{\bxi_{1,\delta}(T_1)=\bxi_{1,\delta}(0)+(1,1)\}}$, then $\PP(R_1=1)=\rho=1-\PP(R_1=0)$.
Set also 
$$
\mathcal{F}_{T_1}:= \sigma\left(R_1; \left\{\bxi_{k,\delta}(t-T_{k-1}):t\in [T_{k-1},T_1]\right\}_{k=1,2}\right).
$$

Next, let $T_2$ be the first time when one of the $\bxi_{k,\delta}(\cdot)$, $k=1,2$, jumps, and we
let $J_2$ be the label, 1 or 2, of the process $\bxi_{1,\delta}$, $\bxi_{2,\delta}$ that jumps first.
When the $\bxi_{J_2,\delta}(\cdot)$
process jumps and is augmented by $(1,0)$, then we initiate $\{\bxi_{3,\delta}(t-T_2):t\ge T_2\}$ with $\bxi_{3,\delta}(0)=(0,1)$. When the $\bxi_{J_2,\delta}(\cdot)$
process is augmented by $(0,1)$,  we set $\bxi_{3,\delta}(0)=(1,0)$, and
set $\bxi_{3,\delta}(0)=(1,1)$ if the $\bxi_{J_2,\delta}(\cdot)$
process is augmented by $(1,1)$.
Define also that
\[
R_2 = \ind_{\left\{\bxi_{J_2,\delta}(T_2-T_{J_2-1})=\bxi_{J_2,\delta}(T_1-T_{J_2-1})+(1,1)\right\}},
\]
and write $\PP^{\mathcal{F}_{T_1}}(\cdot):= \PP(\cdot|\mathcal{F}_{T_1})$.
Then we have
\begin{align*}
&\PP^{\mathcal{F}_{T_1}}\left(R_2=1, J_2=1, T_2-T_1>t\right) \\
&= \rho\times\frac{\alpha  \left(\xi^{(1)}_{1,\delta}(T_1)+\delta
\right)+\gamma\left(\xi^{(2)}_{1,\delta}(T_1)+\delta
\right)}{\alpha\sum_{k=1}^2 \xi^{(1)}_{k,\delta}(T_1-T_{k-1})
+ \gamma\sum_{k=1}^2 \xi^{(2)}_{k,\delta}(T_1-T_{k-1}) +2\delta},\\
\intertext{and since $\sum_{k=1}^2 \xi^{(1)}_{k,\delta}(T_1-T_{k-1})=\sum_{k=1}^2 \xi^{(2)}_{k,\delta}(T_1-T_{k-1})= 2+R_1$, we have}
&= \rho\times\frac{\alpha  \left(\xi^{(1)}_{1,\delta}(T_1)+\delta
\right)+\gamma\left(\xi^{(2)}_{1,\delta}(T_1)+\delta
\right)}{2(1+\delta)+R_1} e^{-t(2(1+\delta)+R_1)}\\
&= \PP^{\mathcal{F}_{T_1}}\left(R_2=1\right)\PP^{\mathcal{F}_{T_1}}\left(J_2=1\right)
\PP^{\mathcal{F}_{T_1}}\left(T_2-T_1>t\right).
\end{align*}
Therefore, $(R_2,J_2, T_2-T_1)$ are conditionally independent under $\PP^{\mathcal{F}_{T_1}}$.

In general, for $n\ge 1$, suppose that we have initiated $n+1$ mBI processes at time $T_n$,
\begin{align}\label{eq:mBIs}
\{\bxi_{k,\delta}(t-T_{k-1}): t\ge T_{k-1}\}_{1\le k\le n+1}.
\end{align}
Define $T_{n+1}$ as the first time when one of the processes in \eqref{eq:mBIs} jumps, and let
 $J_{n+1}$ be the label of the process that jumps at $T_{n+1}$. If the $\bxi_{J_{n+1},\delta}(\cdot)$
process is augmented by $(1,0)$, then we initiate $\{\bxi_{n+2,\delta}(t-T_{n+1}):t\ge T_{n+1}\}$ with $\bxi_{n+2,\delta}(0)=(0,1)$. If the $\bxi_{J_{n+1},\delta}(\cdot)$
process is augmented by $(0,1)$, set $\bxi_{n+2,\delta}(0)=(1,0)$, and
set $\bxi_{n+2,\delta}(0)=(1,1)$ if the $\bxi_{J_{n+1},\delta}(\cdot)$
process is augmented by $(1,1)$.
Hence, $\bxi_{n+2,\delta}(0)$ has the same generating function as in \eqref{eq:pgf_xi0}.
Write 
\[
R_{n+1} = \ind_{\left\{\bxi_{J_{n+1},\delta}(T_{n+1}-T_{J_{n+1}-1})=\bxi_{J_{n+1},\delta}(T_n-T_{J_{n+1}-1})+(1,1)\right\}},
\]
and define
\[
\mathcal{F}_{T_n}:= \sigma\left(\{R_l\}_{l=1}^n; \left\{\bxi_{k,\delta}(t-T_{k-1}):t\in [T_{k-1},T_n]\right\}_{1\le k\le n+1}\right).
\]
Using the fact that
$\sum_{k=1}^{n+1}\xi^{(1)}_{k,\delta}(T_{n}-T_{k-1}) = \sum_{k=1}^{n+1}\xi^{(2)}_{k,\delta}(T_{n}-T_{k-1})= n+1+\sum_{l=1}^n R_l$, 
 we have
\begin{align}
&\PP^{\mathcal{F}_{T_n}}\left(R_{n+1}=1, J_{n+1}=w, T_{n+1}-T_n>t\right) \nonumber\\
&= \rho\times\frac{\alpha  \left(\xi^{(1)}_{w,\delta}(T_n-T_{w-1})+\delta
\right)+\gamma\left(\xi^{(2)}_{w,\delta}(T_n-T_{w-1})+\delta
\right)}{\alpha\sum_{k=1}^{n+1} \xi^{(1)}_{k,\delta}(T_{n}-T_{k-1})
+ \gamma\sum_{k=1}^{n-1} \xi^{(2)}_{k,\delta}(T_n-T_{k-1}) +(n+1)\delta}
\nonumber\\
&\quad \times \exp\left\{-t\left(\alpha\sum_{k=1}^{n+1} \xi^{(1)}_{k,\delta}(T_{n}-T_{k-1})
+ \gamma\sum_{k=1}^{n-1} \xi^{(2)}_{k,\delta}(T_n-T_{k-1}) +(n+1)\delta\right)\right\}\nonumber\\
&= \rho e^{-t\left((n+1)(1+\delta)+\sum_{l=1}^n R_l\right)}\nonumber\\
&\quad \times\frac{\alpha  \left(\xi^{(1)}_{w,\delta}(T_{n}-T_{w-1})+\delta
\right)+\gamma \left(\xi^{(2)}_{w,\delta}(T_{n}-T_{w-1})+\delta
\right)}{(n+1)(1+\delta)+\sum_{l=1}^n R_l} \nonumber\\
&= \PP^{\mathcal{F}_{T_n}}\left(R_{n+1}=1\right)\PP^{\mathcal{F}_{T_n}}\left(J_{n+1}=w\right)
\PP^{\mathcal{F}_{T_n}}\left(T_{n+1}-T_n>t\right), \label{eq:JnRn}
\end{align}
i.e. $(J_{n+1},R_{n+1}, T_{n+1}-T_n)$ are conditionally independent under $\PP^{\mathcal{F}_{T_n}}$.

Write 
\[
{\bxi}^*_\delta(T_n):=\left(\bxi_{1,\delta}(T_n),\bxi_{2,\delta}(T_n-T_1),\ldots, \bxi_{n+1,\delta}(0),
(0,0),\ldots\right), \qquad n\ge 0,
\]
then the embedding framework described above shows that
$\{{\bxi}^*_\delta(T_n): n\ge 0\}$ is Markovian on $\left(\mathbb{N}^2\right)^\infty$.
In the following theorem, we embed the evolution of in- and out-degree processes into the mBI  framework.
\begin{Theorem}\label{thm:embed_mBI}
In $\left(\mathbb{N}^2\right)^\infty$, define the in- and out-degree sequences as
\[
\bD(n) := \left(\bigl(\Din_1(n),\Dout_1(n)\bigr),\ldots, \bigl(\Din_{n+1}(n),\Dout_{n+1}(n)\bigr),
(0,0),\ldots\right).
\]
Then for $\{T_k: k\ge 0\}$ and $\{\bxi_{k,\delta}(t-T_{k-1}): t\ge T_{k-1}\}_{k\ge 1}$ constructed above,
we have that in $\left(\left(\mathbb{N}^2\right)^\infty\right)^\infty$,
\begin{align*}
\bigl\{\bD(n): n\ge 0\bigr\} \stackrel{d}{=} \left\{{\bxi}^*_\delta(T_n): n\ge 0\right\}.
\end{align*}
\end{Theorem}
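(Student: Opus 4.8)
The plan is to exhibit both $\{\bD(n):n\ge0\}$ and $\{\bxi^*_\delta(T_n):n\ge0\}$ as Markov chains on $(\mathbb{N}^2)^\infty$ that share the same initial distribution and the same one-step transition kernel; equality of the laws on $((\mathbb{N}^2)^\infty)^\infty$ then follows from the uniqueness of the law of a Markov chain. That $\{\bxi^*_\delta(T_n)\}$ is Markovian has already been noted, and $\{\bD(n)\}$ is Markovian because the attachment rule \eqref{eq:Din}--\eqref{eq:Dout} together with the reciprocity coin depends on $G(n)$ only through the current degree configuration $\bD(n)$. The initial laws agree trivially: $G(0)$ is a single node carrying a self-loop, so $\bD(0)=((1,1),(0,0),\dots)$, while $\bxi_{1,\delta}(0)=(1,1)$ gives $\bxi^*_\delta(T_0)=((1,1),(0,0),\dots)$.

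The heart of the argument is the matching of the transition kernels, carried out under the identification $\xi^{(1)}_{k,\delta}\leftrightarrow\Din_k$ and $\xi^{(2)}_{k,\delta}\leftrightarrow\Dout_k$. On the branching side I would assemble the one-step law from the superposition of the $n+1$ independent mBI processes: the next jump occurs at total rate $\Lambda_n=\sum_{k=1}^{n+1}\bigl(\alpha\xi^{(1)}_{k,\delta}+\gamma\xi^{(2)}_{k,\delta}+\delta\bigr)$, the jumping process is $w$ with probability proportional to $\alpha(\xi^{(1)}_{w,\delta}+\delta)+\gamma(\xi^{(2)}_{w,\delta}+\delta)$, and, given that $w$ jumps, the increment is $(1,0)$, $(0,1)$ or $(1,1)$ with the conditional probabilities read off from the single-process kernel $P(\bx,\by)$; each increment simultaneously fixes the initial value of the newly created process $\bxi_{n+2,\delta}$ as $(0,1)$, $(1,0)$ or $(1,1)$, in accordance with \eqref{eq:pgf_xi0}. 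This is exactly the structure recorded in the conditional-independence identity \eqref{eq:JnRn}. On the graph side I would list the four elementary outcomes of a step---scenario (i) or (ii), each with or without a reciprocal edge---and for each record both the increment to the chosen node $w$ and the initial degree of the newcomer $n+2$. The correspondence then reads: scenario (i) without reciprocity $\leftrightarrow$ increment $(1,0)$ at $w$, newcomer $(0,1)$; scenario (ii) without reciprocity $\leftrightarrow$ increment $(0,1)$ at $w$, newcomer $(1,0)$; and either reciprocal scenario $\leftrightarrow$ increment $(1,1)$ at $w$, newcomer $(1,1)$.

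Two bookkeeping facts force the probabilities to coincide. First, the normalizing constants agree, since $|E(n)|=n+1+\sum_{l=1}^n R_l$ and the already-used identity $\sum_{k}\xi^{(1)}_{k,\delta}=\sum_{k}\xi^{(2)}_{k,\delta}=n+1+\sum_{l=1}^n R_l$ give $|E(n)|+\delta|V(n)|=\Lambda_n$. Second, the two distinct reciprocal outcomes on the graph side collapse onto the single branching increment $(1,1)$, so their weights $\alpha\rho(\Din_w+\delta)/\Lambda_n$ and $\gamma\rho(\Dout_w+\delta)/\Lambda_n$ must be summed, and their sum reproduces exactly the branching $+(1,1)$ probability $\rho\bigl[\alpha(\xi^{(1)}_{w,\delta}+\delta)+\gamma(\xi^{(2)}_{w,\delta}+\delta)\bigr]/\Lambda_n$; the two non-reciprocal cases match term by term.

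I expect the main obstacle to be precisely this kernel-matching step: correctly assembling the branching one-step law from the superposed mBI dynamics (total rate as the sum of the per-process rates, jumping process proportional to its rate, increment type from $P(\bx,\by)$, and the induced law of the immigrant), together with the label bookkeeping that keeps the updated node in coordinate $w$ and places the newcomer in coordinate $n+2$ on both sides. Once the initial laws and the full transition kernels are shown to agree, an induction on $n$---equivalently, the uniqueness of the law of a Markov chain determined by its initial distribution and transition kernel---completes the proof.
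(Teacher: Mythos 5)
Your proposal is correct and follows essentially the same route as the paper: both arguments reduce the claim to matching the one-step transition kernels of the two Markov chains under the identification $\xi^{(1)}_{k,\delta}\leftrightarrow \Din_k$, $\xi^{(2)}_{k,\delta}\leftrightarrow \Dout_k$, using the identity $|E(n)|+\delta|V(n)| = (n+1)(1+\delta)+\sum_{l=1}^n R_l$ (in distribution) for the normalizing constants and summing the two reciprocal scenarios into the single $(1,1)$ increment, exactly as in \eqref{eq:PAtrans1}--\eqref{eq:PAtrans3} versus \eqref{eq:mBItrans3}. The only cosmetic difference is that you spell out the matching of the initial laws and the superposition structure of the $n+1$ independent mBI processes, which the paper leaves implicit.
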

\begin{proof}
By the model description in Section~\ref{sec:model}, $\{\bD(n):n\ge 0\}$ is Markovian on $\left(\mathbb{N}^2\right)^\infty$, so
it suffices to check the transition probability from $\bD(n)$ to $\bD(n+1)$ agrees with that from 
${\bxi}^*_\delta(T_n)$ to ${\bxi}^*_\delta(T_{n+1})$.
Write
\begin{align*}
\bolde_w^{(1)} &:= \left(\bigl(0,0\bigr),\ldots,\bigl(0,0\bigr),\underbrace{\bigl(1,0\bigr)}_{\text{$w$-th entry}},\bigl(0,0\bigr),\ldots\right),\\
\bolde_w^{(2)} &:= \left(\bigl(0,0\bigr),\ldots,\bigl(0,0\bigr),\underbrace{\bigl(0,1\bigr)}_{\text{$w$-th entry}},\bigl(0,0\bigr),\ldots\right),
\end{align*}
and let $\mathcal{G}_n$ denote the $\sigma$-field generated by the history of the network up to $n$ steps.
Then we have
\begin{align}
\PP^{\mathcal{G}_n}\left(\bD(n+1)=\bD(n)+\bolde_w^{(1)}\right)
&= \frac{\alpha(1-\rho)(\Din_w(n)+\delta)}{|E(n)|+\delta (n+1)},\label{eq:PAtrans1}\\
\PP^{\mathcal{G}_n}\left(\bD(n+1)=\bD(n)+\bolde_w^{(2)}\right)
&= \frac{\gamma(1-\rho)(\Dout_w(n)+\delta)}{|E(n)|+\delta (n+1)},\label{eq:PAtrans2}\\
\PP^{\mathcal{G}_n}\left(\bD(n+1)=\bD(n)+\bolde_w^{(1)}+\bolde_w^{(2)}\right)
&= \rho\frac{\alpha(\Din_w(n)+\delta)+\gamma(\Dout_w(n)+\delta)}{|E(n)|+\delta (n+1)},
\label{eq:PAtrans3}
\end{align}
where $|E(n)|-(n+1)$ follows a binomial distribution with size $n$ and success probability $\rho$.

By \eqref{eq:JnRn}, we see that
\begin{align}
\PP^{\mathcal{F}_{T_n}}& ({\bxi}^*_\delta(T_{n+1}) = {\bxi}^*_\delta(T_n)+\bolde_w^{(1)}
+\bolde_w^{(2)})
= \PP^{\mathcal{F}_{T_n}} (J_{n+1}=w, R_{n+1}=1)\nonumber\\
&= \frac{\alpha \rho \left(\xi^{(1)}_{w,\delta}(T_{n}-T_{w-1})+\delta
\right)+\gamma\rho \left(\xi^{(2)}_{w,\delta}(T_{n}-T_{w-1})+\delta
\right)}{(n+1)(1+\delta)+\sum_{l=1}^n R_l}.\label{eq:mBItrans3}
\end{align}
Note also that for $n\ge 1$,
$\PP^{\mathcal{F}_{T_{n-1}}}(R_n = 1) = \rho = 1-\PP^{\mathcal{F}_{T_{n-1}}}(R_n = 0)$.
Then for $n\ge 1$ and $r_l\in \{0,1\}$, $1\le l\le n$, we have
\begin{align*}
\PP(R_l=r_l, 1\le l\le n) &= \EE\left[\ind_{\left\{R_l=r_l, 1\le l\le n-1\right\}}
\PP^{\mathcal{F}_{T_{n-1}}}(R_n = r_n)\right],\\
\intertext{and if $r_n=1$, we have}
&= \PP(R_l=r_l, 1\le l\le n-1) \PP(R_n=1)\\
&=\cdots =\prod_{l=1}^n \PP(R_l=r_l).
\end{align*}
Therefore, $\{R_l: 1\le l\le n\}$ are iid Bernoulli random variables with $\PP(R_1=1)=\rho$, then 
$\sum_{l=1}^n R_l$ has the same distribution as $|E(n)|-(n+1)$, which
shows that the transition probability in \eqref{eq:mBItrans3} agrees with that in \eqref{eq:PAtrans3}.
Similarly, 
\begin{align*}
\PP^{\mathcal{F}_{T_n}} \left({\bxi}^*_\delta(T_{n+1}) = {\bxi}^*_\delta(T_n)+\bolde_w^{(1)}\right)
& = \frac{\alpha (1-\rho) (\xi_{w,\delta}^{(1)}(T_n-T_{w-1})+\delta)}{(n+1)(1+\delta)+\sum_{l=1}^n R_l},\\
\PP^{\mathcal{F}_{T_n}} \left({\bxi}^*_\delta(T_{n+1}) = {\bxi}^*_\delta(T_n)+\bolde_w^{(2)}\right)
& = \frac{\gamma (1-\rho) (\xi_{w,\delta}^{(2)}(T_n-T_{w-1})+\delta)}{(n+1)(1+\delta)+\sum_{l=1}^n R_l},
\end{align*}
which agree with transition probabilities in \eqref{eq:PAtrans1} and \eqref{eq:PAtrans2}, respectively.
\end{proof}

\subsection{Multivariate regular variation}
Later in Section~\ref{subsec:Nij}, we will derive the asymptotic distribution of the joint in- and out-degree counts, and show that they are jointly heavy tailed. To formalize our analysis, we provide some useful definitions related to \emph{multivariate regular variation} (MRV).

Suppose that $\mathbb{C}_0\subset\mathbb{C}\subset\mathbb{R}_+^2$ are two closed cones, and we provide the definition of $\mathbb{M}$-convergence in Definition~\ref{def:Mconv}
(cf. \cite{lindskog:resnick:roy:2014,hult:lindskog:2006a,das:mitra:resnick:2013,kulik:soulier:2020,basrak:planinic:2019}) on $\mathbb{C}\setminus \mathbb{C}_0$, which lays the theoretical foundation of regularly varying measures (cf. Definition~\ref{def:MRV}).
\begin{Definition}\label{def:Mconv}
Let $\mathbb{M}(\mathbb{C}\setminus \mathbb{C}_0)$ be the set of Borel
measures on $\mathbb{C}\setminus \mathbb{C}_0$ which are finite on
sets bounded away from $\mathbb{C}_0$, and
$\mathcal{C}(\mathbb{C}\setminus \mathbb{C}_0)$ be the set of
continuous, bounded, non-negative functions on $\mathbb{C}\setminus
\mathbb{C}_0$ whose supports are bounded away from  $\mathbb{C}_0$. Then for $\mu_n,\mu \in \mathbb{M}(\mathbb{C}\setminus
\mathbb{C}_0)$, we say $\mu_n \to \mu$ in
$\mathbb{M}(\mathbb{C}\setminus \mathbb{C}_0)$, if $\int f\dd
\mu_n\to\int f\dd \mu$ for all $f\in \mathcal{C}(\mathbb{C}\setminus
\mathbb{C}_0)$. 
\end{Definition} 

Without loss of generality \cite{lindskog:resnick:roy:2014}, we can
and do take functions in  $\mathcal{C}(\mathbb{C}\setminus
\mathbb{C}_0)$ to be uniformly continuous as well.  Denote the modulus
of continuity of a uniformly continuous function $f:\R_+^p \mapsto
\R_+$ by
\begin{equation}\label{e:defModCon}
\Delta_f(\delta)=\sup\{ |f(\bx)-f(\by)|:
d(\bx,\by)<\delta\}\end{equation}
 where $d(\cdot,\cdot)$ is an appropriate metric
on the domain of $f$. Uniform continuity means $\lim_{\delta \to 0}
\Delta_f(\delta)=0.$ 
 
\begin{Definition}\label{def:MRV}
The distribution of a
 random vector $\bZ$ on $\mathbb{R}_+^2$, i.e.
$\PP(\bZ\in\cdot)$, 
  is (standard) regularly varying on $\mathbb{C}\setminus \mathbb{C}_0$ with index $c>0$ (written as $\bZ\in \text{MRV}(c, b(t), \nu, \mathbb{C}\setminus \mathbb{C}_0)$) if there exists some scaling function $b(t)\in \text{RV}_{1/c}$ and a limit measure $\nu(\cdot)\in \mathbb{M}(\mathbb{C}\setminus \mathbb{C}_0)$ such that
as $t\to\infty$,
\begin{equation}\label{eq:def_mrv}
t\PP(\bZ/b(t)\in\cdot)\rightarrow \nu(\cdot),\qquad\text{in }\mathbb{M}(\mathbb{C}\setminus \mathbb{C}_0).
\end{equation}
\end{Definition}

When analyzing the asymptotic dependence between components of a
bivariate random vector
$\bZ$ satisfying \eqref{eq:def_mrv},  it is often informative  
to make a polar coordinate transform and
consider the transformed points located on the $L_1$ unit sphere
\begin{align}
\label{eq:map_L1}
(x,y)\mapsto\left(\frac{x}{|x|+|y|},\frac{y}{|x|+|y|}\right),
\end{align}
after thresholding the data according to the
$L_1$ norm.
The plot of the transformed points is referred to as the diamond plot,
which provides a visualization of dependence. 
Also, provided that $|x|+|y|$ is larger than some predetermined threshold, the density plot of thresholded values $x/(|x|+|y|)$ is called the angular density plot.
These plots characterize the asymptotic dependence structure for extremal observations.

In Section~\ref{subsec:HRV}, we apply the transformation in \eqref{eq:map_L1} to nodes with large
in- and out-degrees, and find that the angular density plot concentrates around 
some particular value. In the terminology of \cite{das:resnick:2017}, this indicates that the limiting in- and out-degree pair has \emph{full asymptotic dependence}.

\subsubsection{A modification of Breiman's
  Theorem.}\label{subsub:breiman} For the study of the regular
variation properties of the asymptotic distribution of degree
frequencies given in Theorem \ref{thm:IO_mrv}, we need the following
generalization of Breiman's Theorem \cite{breiman:1965}. This result
about products has spawned many proofs and generalizations. See for
instance
\cite{resnickbook:2007,
  kulik:soulier:2020,
  fougeres:mercadier:2012,
  maulik:resnick:rootzen:2002, chen:chen:gao:2019,
  basrak:davis:mikosch:2002b}.

\begin{Theorem}\label{th:extendBrei}
Suppose $\{\bxi(t): t\geq 0\}$ is an $\R_+^p$-valued stochastic
process 
for some $p\geq 1$.  Let $X$ be a positive random variable with
regularly varying distribution satisfying for some scaling function $b(t)$,
$$\lim_{t\to\infty} t\PP( X/b(t) >x) =x^{-c} =:\nu_c \bigl((x,\infty)\bigr), \quad x>0, c>0.$$
Further suppose
\begin{enumerate}
\item For some finite and positive random vector $\bxi_\infty$,
  $$\lim_{ t \to \infty} {\bxi(t)} =\bxi_\infty \quad (\text{almost surely});$$
  \item The random variable $X$ and the process $\bxi (\cdot)$ are independent.
  \end{enumerate}
  Then:

  (i) In $\mathbb{M}(\R_+^p \times (\R_+\setminus \{0\}))$,
  \begin{equation}\label{e:beforeMult}
    t\PP\Bigl[ \Bigl({\bxi(X)}, \frac{X}{b(t)}\Bigr) \in \cdot \,\Bigr]
    \longrightarrow \PP(\bxi_\infty \in \cdot \,) \times
    \nu_c (\cdot)=:\eta(\cdot).\end{equation}
  If $\bxi_\infty$ is of the form $\bxi_\infty =:L\bv$ where $L>0$
  almost surely and $\bv \in (0,\infty)^p$, then $\eta (\cdot)$
  concentrates on the subcone $\mathcal{L}\times (\R_+\setminus
  \{0\})$ where $\mathcal{L}=\{\theta \bv : \theta>0\}$.

  (ii) If additionally, for some $c'>c$ we have the condition
  \begin{equation}\label{e:extraCond}
  \kappa:=  \sup_{t\geq 0}   \EE\left[ \Bigl(  { \|\bxi(t) \|}  \Bigr)^{c'}\right]
    <\infty,
  \end{equation}
  for some $L_p$ norm $\|\cdot\|$,  then the product 
  of components in
  \eqref{e:beforeMult}, $\bxi(X)X $, has a regularly varying distribution with
  scaling function $b(t)$ and 
    in $\mathbb{M}(\R_+^p \setminus
\{\bzero\})$, 
  \begin{equation}\label{e:prodOK}
    t\PP\Bigl[\frac{X\bxi(X)}{b(t)} \in \cdot \,\Bigr]
    \longrightarrow \left(\PP(\bxi_\infty \in \cdot \,) \times \nu_c\right) \circ h^{-1},
  \end{equation}
where $h(\by,x)=x\by$.
\end{Theorem}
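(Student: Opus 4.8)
The plan is to establish part (i) by a direct analysis of $\mathbb{M}$-convergence, and then to deduce part (ii) by a truncation argument that reduces it to part (i) via the product map, with the moment condition \eqref{e:extraCond} controlling a single boundary contribution. For part (i), I would test against an arbitrary $f\in\mathcal{C}(\R_+^p\times(\R_+\setminus\{0\}))$, which I may take to be uniformly continuous. Since the support of $f$ is bounded away from $\mathbb{C}_0=\R_+^p\times\{0\}$, there is $\epsilon>0$ with $f(\by,x)=0$ for $x\le\epsilon$, so only the event $\{X>\epsilon b(t)\}$ contributes, and on that event $X\to\infty$. The first step is the split
$$t\EE\bigl[f(\bxi(X),X/b(t))\bigr]=t\EE\bigl[f(\bxi_\infty,X/b(t))\bigr]+t\EE\bigl[f(\bxi(X),X/b(t))-f(\bxi_\infty,X/b(t))\bigr].$$

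For the first term I would condition on $\bxi_\infty$ (independent of $X$) and use the one-dimensional regular variation $t\PP(X/b(t)\in\cdot)\to\nu_c$ together with dominated convergence to obtain $\eta(f)$. For the remainder, I bound the integrand by the modulus of continuity $\Delta_f(\|\bxi(X)-\bxi_\infty\|)$ and, conditioning on $X=x$, set $\phi(x):=\EE[\Delta_f(\|\bxi(x)-\bxi_\infty\|)]$, which is bounded and, by Condition 1 and bounded convergence, satisfies $\phi(x)\to0$ as $x\to\infty$; since all the mass sits on $\{X>\epsilon b(t)\}$ and $t\PP(X>\epsilon b(t))\to\epsilon^{-c}$, the remainder vanishes. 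The concentration claim is then immediate: if $\bxi_\infty=L\bv$, then $\PP(\bxi_\infty\in\cdot)$ lives on $\mathcal{L}$, so $\eta$ lives on $\mathcal{L}\times(\R_+\setminus\{0\})$.

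For part (ii), the difficulty is that the map $h(\by,x)=x\by$ is not proper for $\mathbb{M}$-convergence: preimages of sets bounded away from $\bzero$ need not be bounded away from $\mathbb{C}_0$, since a small $x$ can be compensated by a large $\|\bxi\|$. To handle this I would introduce a continuous cutoff $h_\delta$ with $h_\delta(x)=0$ for $x\le\delta/2$ and $h_\delta(x)=1$ for $x\ge\delta$, and for a test function $f\in\mathcal{C}(\R_+^p\setminus\{\bzero\})$ apply part (i) to $g_\delta(\by,x):=f(x\by)h_\delta(x)$, whose support is bounded away from $\mathbb{C}_0$. Letting $t\to\infty$ and then $\delta\to0$, dominated convergence gives $\eta(g_\delta)\to(\eta\circ h^{-1})(f)$, where finiteness uses $\EE[\|\bxi_\infty\|^c]<\infty$, itself a consequence of Fatou applied to \eqref{e:extraCond} with $c<c'$.

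The main obstacle, and the only place \eqref{e:extraCond} is essential, is showing that the discarded piece is asymptotically negligible:
$$\lim_{\delta\to0}\limsup_{t\to\infty} t\EE\bigl[f(X\bxi(X)/b(t))\,\ind_{\{X\le\delta b(t)\}}\bigr]=0.$$
Here I would bound $f$ by $\|f\|_\infty\,\ind_{\{X\|\bxi(X)\|\ge\epsilon b(t)\}}$, condition on $X=x$ using the independence of $X$ and $\bxi(\cdot)$, and apply Markov's inequality at order $c'$, so that $\PP(x\|\bxi(x)\|\ge\epsilon b(t))\le\kappa x^{c'}/(\epsilon b(t))^{c'}$. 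This reduces the estimate to the truncated moment $\EE[X^{c'}\ind_{\{X\le\delta b(t)\}}]$, which Karamata's theorem evaluates as asymptotic to $\tfrac{c}{c'-c}(\delta b(t))^{c'}\PP(X>\delta b(t))$; combined with $t\PP(X>\delta b(t))\to\delta^{-c}$, the whole expression is $O(\delta^{c'-c})$, which vanishes as $\delta\to0$ precisely because $c'>c$. Combining this estimate with the part-(i) limit for $g_\delta$ via the standard approximation argument—taking $\limsup_{t\to\infty}$ first and then $\delta\to0$—yields \eqref{e:prodOK}.
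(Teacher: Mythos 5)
Your proposal is correct and follows essentially the same route as the paper: the same decomposition in part (i) (replace $\bxi(X)$ by $\bxi_\infty$ and invoke one-dimensional regular variation plus the modulus of continuity), and in part (ii) the same reduction of the product convergence to the joint convergence together with the Markov--Karamata estimate of the region $\{X\le \delta b(t)\}$, which is precisely where \eqref{e:extraCond} enters. The only cosmetic differences are that the paper controls the replacement term via Egorov's theorem rather than by conditioning on $X=x$ and dominating $\EE\bigl[\Delta_f(\|\bxi(x)-\bxi_\infty\|)\bigr]$, and transfers the joint limit through the scaling map $\chi$ with a restriction to $\{X/b(t)>\eta\}$ rather than through your continuous cutoff $h_\delta$.
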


For the classical Breiman Theorem where $p=1$ and $\bxi(t)\equiv \bxi_\infty$,
 \eqref{e:extraCond} is the expected moment condition.
Detailed proofs of Theorem~\ref{th:extendBrei} are collected in Section~\ref{subsec:pf_breim}.


\section{Asymptotic Results}\label{sec:res}
With the embedding framework in Theorem~\ref{thm:embed_mBI}, we derive
theoretical results on: (1) the a.s. growth of in- and out-degrees for
a fixed node; (2) the limiting joint distribution of in- and
out-degrees based on degree counts; (3) the asymptotic dependence
structure for large in- and out-degrees.

\subsection{Convergence of Degrees for a Fixed Node}
By the construction of the $\{\bxi_{k,\delta}(t):t\ge 0\}_{k\ge 1}$, we see that the corresponding $A$ matrix as introduced in \eqref{eq:matrixA} is equal to
\begin{equation}\label{eq:defA}
A=
\begin{bmatrix}
\alpha & \alpha\rho\\
\gamma\rho & \gamma
\end{bmatrix},
\end{equation}
with the largest eigenvalue 
$$
\lambda_1= \frac{1}{2}\left(1+\sqrt{(\alpha-\gamma)^2+4\alpha\gamma\rho^2}\right)
=: \frac{1+\sqrt{\Delta}}{2}.
$$
The left and right eigenvectors with respect to $\lambda_1$ are
\begin{align}
\bv &= \left(\frac{\alpha+\gamma(2\rho-1)+\sqrt{\Delta}}{2\sqrt{\Delta}}\right)\begin{bmatrix}
1\medskip\\
\frac{\gamma-\alpha+\sqrt{\Delta}}{2\gamma\rho}
\end{bmatrix}, \label{eq:defv}\\
\bu &= \left(\frac{\alpha+\gamma(2\rho-1)+\sqrt{\Delta}}{2\sqrt{\Delta}}\right)^{-1}\begin{bmatrix}
\frac{\alpha-\gamma +\sqrt{\Delta}}{2\sqrt{\Delta}}
\medskip\\
\frac{\gamma \rho}{\sqrt{\Delta}}
\end{bmatrix} , \label{eq:defu}
\end{align}
which satisfy $\bu^T \bv=1$ and $\bu^T\ind = 1$.
The following theorem gives the joint convergence of $(\Din_w(n), \Dout_w(n))$ for a fixed node $w$.

\begin{Theorem}\label{thm:fixed_node}
Suppose that $\bv,\bu\in \mathbb{R}_+^2$ are as defined in \eqref{eq:defv} and \eqref{eq:defu}. 
Then for $w\ge 1$, there exists some finite random variable $L_w$ such that
\begin{align}
\label{eq:conv_deg}
\left(\frac{\Din_w(n)}{n^{\lambda_1/(1+\rho+\delta)}},\,\frac{\Dout_w(n)}{n^{\lambda_1/(1+\rho+\delta)}}\right)\convas
  L_w \bv.
\end{align}
\end{Theorem}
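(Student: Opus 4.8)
The plan is to transfer the problem to the embedded processes via Theorem~\ref{thm:embed_mBI} and then combine two ingredients: the intrinsic almost-sure limit of a single mBI process from Theorem~\ref{thm:mBI}, and a sharp almost-sure description of the global jump times $T_n$. By Theorem~\ref{thm:embed_mBI}, the degree pair $(\Din_w(n),\Dout_w(n))$ of a fixed node $w$ has, jointly in $n$, the same law as the $w$-th coordinate $\bxi_{w,\delta}(T_n-T_{w-1})$ of $\bxi^*_\delta(T_n)$. Since almost-sure convergence of the rescaled sequence is a measurable event on the path space, it suffices to establish $\bxi_{w,\delta}(T_n-T_{w-1})/n^{\lambda_1/(1+\rho+\delta)}\convas L_w\bv$ for the embedded construction, and the corresponding statement for $\bD(n)$ then follows from the distributional identity.

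First I would pin down the growth of $T_n$. Writing $\zeta_k:=T_k-T_{k-1}$, the computation leading to \eqref{eq:JnRn} shows that, conditionally on $\mathcal{F}_{T_{k-1}}$, the increment $\zeta_k$ is exponential with rate $r_k:=k(1+\delta)+S_{k-1}$, where $S_{k-1}:=\sum_{l=1}^{k-1}R_l$. Since the $R_l$ are iid Bernoulli$(\rho)$, the strong law gives $S_{k-1}/k\convas\rho$, hence $r_k=k(1+\rho+\delta)(1+o(1))$ almost surely. I would then decompose $T_n=\sum_{k=1}^n\zeta_k=\sum_{k=1}^n r_k^{-1}+M_n$, where $M_n:=\sum_{k=1}^n(\zeta_k-r_k^{-1})$ is a martingale with conditional variances $r_k^{-2}=O(k^{-2})$. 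Summability of these variances yields almost-sure convergence of $M_n$, while the compensator satisfies $\sum_{k=1}^n r_k^{-1}-\tfrac{\log n}{1+\rho+\delta}\convas$ a finite limit. Consequently $T_n-\tfrac{\log n}{1+\rho+\delta}\convas C$ for some almost-surely finite random variable $C$, so that $e^{\lambda_1 T_n}/n^{\lambda_1/(1+\rho+\delta)}\convas e^{\lambda_1 C}$.

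Next I would apply Theorem~\ref{thm:mBI} to the standalone mBI process $\bxi_{w,\delta}$ (immigration parameter $\delta$, offspring mechanism \eqref{eq:pgf1}--\eqref{eq:pgf2}), obtaining a finite random variable $L'_w$ with $e^{-\lambda_1 s}\bxi_{w,\delta}(s)\convas L'_w\bv$ as the real time $s\to\infty$; the required $\log$-moment condition is automatic since each offspring count is bounded by $2$. Because this is a pathwise limit in continuous time and $s_n:=T_n-T_{w-1}\convas\infty$, evaluation along $s_n$ is legitimate irrespective of the dependence between $T_n$ and $\bxi_{w,\delta}$, giving $e^{-\lambda_1(T_n-T_{w-1})}\bxi_{w,\delta}(T_n-T_{w-1})\convas L'_w\bv$. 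Writing
\[
\frac{\bxi_{w,\delta}(T_n-T_{w-1})}{n^{\lambda_1/(1+\rho+\delta)}}
=\frac{e^{\lambda_1 T_n}}{n^{\lambda_1/(1+\rho+\delta)}}\;e^{-\lambda_1 T_{w-1}}\;e^{-\lambda_1(T_n-T_{w-1})}\bxi_{w,\delta}(T_n-T_{w-1})
\]
and letting $n\to\infty$, the three factors converge to $e^{\lambda_1 C}$, the fixed quantity $e^{-\lambda_1 T_{w-1}}$, and $L'_w\bv$, respectively. Setting $L_w:=e^{\lambda_1(C-T_{w-1})}L'_w$ yields \eqref{eq:conv_deg}.

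I expect the main obstacle to be the \emph{sharp} asymptotics of $T_n$ in the second step: it is not enough that $T_n/\log n\to(1+\rho+\delta)^{-1}$, because the exponential scaling $e^{\lambda_1 T_n}$ amplifies any $o(\log n)$ error into a non-convergent factor. One must therefore control both the martingale fluctuation $M_n$ and the randomness of the compensator $\sum_k r_k^{-1}$ (which depends on the Bernoulli partial sums $S_{k-1}$) finely enough to conclude that $T_n-\tfrac{\log n}{1+\rho+\delta}$ has an almost-sure finite limit rather than merely a bounded range. The remaining bookkeeping—finiteness of $L'_w$, the interchange of the almost-sure limits, and the transfer back to $\bD(n)$ via Theorem~\ref{thm:embed_mBI}—is routine.
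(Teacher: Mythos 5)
Your proposal is correct and follows essentially the same route as the paper's proof: transfer to the embedded mBI processes via Theorem~\ref{thm:embed_mBI}, establish the sharp almost-sure asymptotics $T_n-\tfrac{\log n}{1+\rho+\delta}\convas Z$, invoke the almost-sure limit $e^{-\lambda_1 s}\bxi_{w,\delta}(s)\convas L'_w\bv$ along the random times $T_n-T_{w-1}\to\infty$, and multiply the limits. The only cosmetic difference is that you re-derive the $T_n$ asymptotics from the martingale-plus-compensator decomposition, whereas the paper cites \cite[Theorem III.9.1]{athreya:ney:1972} and the machinery of \cite[Proposition 2.2]{athreya:ghosh:sethuraman:2008} for the same conclusion, and the paper additionally spells out the slightly different initial distributions of $\bxi_{1,\delta}(0)$ versus $\bxi_{w,\delta}(0)$ for $w\ge 2$, which does not affect the argument.
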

\begin{proof}
By the embedding results in Theorem~\ref{thm:embed_mBI}, we prove  
\eqref{eq:conv_deg} using the mBI framework, i.e. by proving
\begin{align*}
\frac{\bxi_{w,\delta}(T_n-{T_{w-1}})}{n^{\lambda_1}}\convas
  L_w \bv.
\end{align*}
Based on the definition of $\{T_k:k\ge 0\}$,
 we apply \cite[Theorem III.9.1]{athreya:ney:1972} to obtain that
\[
T_n -\sum_{k=1}^n \frac{1}{(1+\delta)k+\sum_{l=1}^k R_l}
\] 
is an $L_2$-bounded martingale with respect to $\{\mathcal{F}_{T_k}:k\ge 0\}$.
 Then applying the proof machinery in \cite[Proposition 2.2]{athreya:ghosh:sethuraman:2008}
 gives that
 there exists some finite random variable $Z$ such that
 \begin{align}\label{eq:conv_Tn}
 T_n - \frac{1}{1+\rho+\delta} \log n \convas  Z,
 \end{align}
 Then we are left to show 
 the convergence of $e^{-\lambda_1 (T_n-T_{w-1})} \bxi_{w,\delta}(T_n-T_{w-1})$ for $w\ge 1$, as $n\to\infty$.

Recall the construction of $\{\bxi_{k,\delta}(\cdot)\}_{k\ge 1}$ in Section~\ref{subsubsec:embed}. 
For $w=1$, the $\bxi_{1,\delta}(0)=(1,1)$, and the offspring generating functions are given in \eqref{eq:pgf1} and \eqref{eq:pgf2}. 
This leads to a slightly different mBI setup from the one given in Section~\ref{subsec:mBI}, which we now describe.
Let $\{\bzeta_0(t):t\ge 0\}$
be a two-type branching process with $\bzeta_0(0)=(1,1)$ and offspring distribution as in \eqref{eq:pgf1} and \eqref{eq:pgf2}. Assume $\{\bzeta_k(\cdot)\}_{k\ge 1}$ are independent from $\bzeta_0(\cdot)$, and 
are iid copies of the two-type branching process, where 
$\bzeta_k(0)$ has the same distribution as in \eqref{eq:p0r}, and $\bzeta_k(\cdot)$ has 
the same offspring generating functions as in \eqref{eq:pgf1} and \eqref{eq:pgf2}.
Denote jump times of an independent Poisson process with intensity $\delta>0$ as  $\{\tau_k\}_{k\ge 1}$, 
and we have
 \[
 \bxi_{1,\delta}(t) {=} \bzeta_0(t) + \sum_{k=1}^\infty \bzeta_k(t-\tau_k)\ind_{\{t\ge \tau_k\}}.
 \]
 Therefore, by the convergence results in \eqref{eq:conv_bxi} and \eqref{eq:conv_bxiImmi}, we have that as $t\to\infty$,
 \[
 e^{-\lambda_1 t} \bxi_{1,\delta}(t) \convas  \left(\widetilde{W}_0 + \sum_{k=1}^\infty e^{-\lambda_1\tau_k}\widetilde{W}_k\right) \bv =: L_1\bv,
 \]
 where $\{\widetilde{W}_k: k\ge 0\}$ satisfy $e^{-\lambda_1 t}\bzeta_k(t)  \widetilde{W}_k \bv$, and are independent from $\{\tau_k:k\ge 1\}$.
 Hence, as $n\to\infty$, 
 \begin{align}\label{eq:conv_bxi1}
 e^{-\lambda_1 T_n} \bxi_{1,\delta}(T_n) \convas  \left(\widetilde{W}_0 + \sum_{k=1}^\infty e^{-\lambda_1\tau_k}\widetilde{W}_k\right) \bv.
 \end{align}
 Combining \eqref{eq:conv_Tn} with \eqref{eq:conv_bxi1} gives
 \[
 \frac{\bxi_{1,\delta}(T_n)}{n^{\lambda_1/(1+\rho+\delta)}} \convas e^{\lambda_1 Z} \left(\widetilde{W}_0 + \sum_{k=1}^\infty e^{-\lambda_1\tau_k}\widetilde{W}_k\right) \bv. 
 \]
 
 For $w\ge 2$, we need to modify the initialization of the two-type branching process at $t=0$, i.e. we assume at $t=0$, for the initial process $\{\bzeta'_0(t):t\ge 0\}$, $\bzeta'_0(0)$ is a random vector with generating function as in \eqref{eq:pgf_xi0}. We assume that the offspring generating functions for $\bzeta'_0(\cdot)$ are the same as in \eqref{eq:pgf1} and \eqref{eq:pgf2}. Then by 
\eqref{eq:conv_bxi}, there exists some random variable $W_0'$ such that
 \[
 e^{-\lambda_1 t}\bzeta'_0(t)\convas W_0'\bv.
 \]
Let $\{\tau'_k\}_{k\ge 1}$ be jump times of a Poisson process with intensity $\delta>0$, independent from $\{\tau_k\}_{k\ge 1}$.
 Assume also that $\{\bzeta'_k(\cdot)\}_{k\ge 1}$ are iid copies of the $\bzeta_1(\cdot)$ process, and are independent from both $\bzeta'_0(\cdot)$ and $\{\tau'_k:k\ge 1\}$.
 Then for $w\ge 2$,
 \[
 \bxi_{w,\delta}(t) = \bzeta'_0(t) + \sum_{k=1}^\infty \bzeta'_k(t-\tau'_k)\ind_{\{t\ge \tau'_k\}},
 \]
 which leads to
 \begin{align}\label{eq:conv_bxi2}
 e^{-\lambda_1 (T_n-T_{w-1})} \bxi_{w,\delta}(T_n-T_{w-1}) \convas \left({W}_0' + \sum_{k=1}^\infty e^{-\lambda_1\tau'_k}\widetilde{W}_k\right) \bv.
 \end{align}
 Again, combining \eqref{eq:conv_Tn} with \eqref{eq:conv_bxi2} gives 
 \[
 \frac{\bxi_{w,\delta}(T_n-T_{w-1})}{n^{\lambda_1/(1+\rho+\delta)}} \convas e^{\lambda_1(Z-T_{w-1})} \left({W}'_0 + \sum_{k=1}^\infty e^{-\lambda_1\tau'_k}\widetilde{W}_k\right) \bv =: L_w\bv,
 \]
 which completes the proof of the theorem.
\end{proof}

One special case is having either $\alpha=1$ or $\gamma=1$, where we 
specify the distribution of $L_w$. We only consider the $\alpha=1$ case here, and results for $\gamma=1$ follow from a similar argument. When $\alpha=1$, the matrix $A$ associated with $\{\bxi_{w,\delta}(t): t\ge 0\}$ becomes 
\begin{equation}\label{eq:A_alpha_1}
A=
\begin{bmatrix}
1 & \rho\\
0 & 0
\end{bmatrix},
\end{equation}
with $\lambda_1=1$ and $\bv = (1,\rho)^T$. Then Theorem~\ref{thm:fixed_node} implies 
\[
\left(\frac{\Din_w(n)}{n^{1/(1+\rho+\delta)}}, \frac{\Dout_w(n)}{n^{1/(1+\rho+\delta)}}\right)
\convas L_w \begin{bmatrix}
1 \\
\rho
\end{bmatrix}.
\]
Note also that by the special structure in \eqref{eq:A_alpha_1}, we have
\[
\bxi^{(2)}_{w,\delta}(T_n-T_{w-1}) = \ind_{\{w=1\}}+\sum_{m=1}^{\xi^{(1)}_{w,\delta}(T_n-T_{w-1})} R_m,
\]
and $\{\xi^{(1)}_{w,\delta}(t): t\ge 0\}$ is a single-type birth immigration process with 
$\xi^{(1)}_{w,\delta}(0)=\ind_{\{w=1\}}+R_{w-1}\ind_{\{w\ge 2\}}$,
and transition rate $q_{h,h+1}=h+\delta$.
Therefore by \cite{tavare:1987}, when $\alpha=1$, the limiting random variable $L_w$ has pdf
\begin{align}
\label{eq:pdf_L}
f_{L_w}(x) = \frac{1-\rho}{\Gamma\left(\ind_{\{w=1\}}+\delta\right)} x^{\ind_{\{w=1\}}+\delta-1}e^{-x}
+ \frac{\rho}{\Gamma\left(1+\delta\right)} x^{\delta}e^{-x},\qquad x\ge 0.
\end{align}

\subsection{Convergence of Degree Counts}\label{subsec:Nij}
In the current section, we study the limiting behavior of joint degree counts:
\begin{align*}
N_{m,l}(n) = \sum_{w=1}^{n+1} \ind_{\left\{(\Din_w(n),\Dout_w(n))=(m,l)\right\}}.
\end{align*}
Using the embedding result, the following theorem shows the convergence
of $N_{m,l}(n)/n$.
\begin{Theorem}\label{thm:limitNij}
Suppose that $\{\widetilde{\bxi}_\delta(t):t\ge 0\}$ is a two-type mBI process with the same 
initialization and branching mechanism as $\bxi_{2,\delta}(\cdot)$
given in \eqref{eq:pgf_xi0}.  Then as $n\to\infty$, we have
for $m,l\ge 0$,
\begin{align}\label{eq:Nij}
\frac{N_{m,l}(n)}{n}&\convp \int_0^\infty (1+\rho+\delta)e^{-t(1+\rho+\delta)}
\PP\left(\widetilde{\bxi}_\delta(t)=(m,l)\right)\dd t \nonumber \\
&= : \PP\left(\big(\mathcal{I},\mathcal{O}\bigr) = (m,l)\right).
\end{align}
\end{Theorem}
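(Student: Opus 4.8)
The plan is to combine the branching-process embedding of Theorem~\ref{thm:embed_mBI} with a first-moment computation and an Azuma-type concentration bound. By Theorem~\ref{thm:embed_mBI} it suffices to analyze
\[
\widetilde N_{m,l}(n):=\sum_{w=1}^{n+1}\ind_{\{\bxi_{w,\delta}(T_n-T_{w-1})=(m,l)\}},
\]
which has the same law as $N_{m,l}(n)$, and to show $\widetilde N_{m,l}(n)/n$ converges in probability to the integral on the right of \eqref{eq:Nij}. I would split the task into (I) identifying $\lim_n \EE[\widetilde N_{m,l}(n)]/n$ and (II) showing $\widetilde N_{m,l}(n)/n-\EE[\widetilde N_{m,l}(n)]/n\convp 0$.

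For step (I) the key observation is that the ``age'' $A_w(n):=T_n-T_{w-1}$ of node $w$ becomes asymptotically deterministic. Writing $\lambda:=1+\rho+\delta$ and using \eqref{eq:conv_Tn}, both $T_n-\lambda^{-1}\log n$ and $T_{w-1}-\lambda^{-1}\log(w-1)$ converge a.s.\ to the same limit $Z$, so for $w=\lfloor cn\rfloor$ with $c\in(0,1)$ the random fluctuation cancels and $A_w(n)\to \lambda^{-1}\log(1/c)=:a(c)$ a.s. Since each $\bxi_{w,\delta}$ is a pure-jump process with no fixed jump time, the deterministic instant $a(c)$ is a.s.\ a continuity point; because $A_w(n)$ concentrates at the constant $a(c)$ it decouples from $\bxi_{w,\delta}$, and a continuous-mapping/Slutsky argument together with the fact that $\bxi_{w,\delta}\stackrel{d}{=}\widetilde{\bxi}_\delta$ for $w\ge2$ (same initialization \eqref{eq:pgf_xi0} and branching mechanism) gives $\PP\bigl(\bxi_{w,\delta}(A_w(n))=(m,l)\bigr)\to g_{m,l}(a(c))$, where $g_{m,l}(t):=\PP\bigl(\widetilde{\bxi}_\delta(t)=(m,l)\bigr)$. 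Averaging over $w$ turns the first moment into a Riemann sum for $\int_0^1 g_{m,l}(a(c))\,\dd c$, and the substitution $c=e^{-\lambda t}$ (under which $c\sim\text{Unif}(0,1)$ corresponds to the age density $\lambda e^{-\lambda t}$) converts this to $\int_0^\infty \lambda e^{-\lambda t}g_{m,l}(t)\,\dd t$, exactly the right-hand side of \eqref{eq:Nij}.

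For step (II) I would use the Doob martingale $M_j:=\EE[N_{m,l}(n)\mid\mathcal{G}_j]$, $0\le j\le n$, for the one-step filtration of the PA model. A single step adds one node and increments the in- or out-degree of one existing node, possibly with a reciprocal edge (cf.\ \eqref{eq:PAtrans1}--\eqref{eq:PAtrans3}), so it alters any count $N_{m,l}$ by a bounded amount; a standard coupling argument from this literature shows that resampling one step perturbs the terminal count by at most a universal constant, whence the differences $|M_j-M_{j-1}|$ are uniformly bounded. Azuma--Hoeffding then yields $\PP(|N_{m,l}(n)-\EE N_{m,l}(n)|>\varepsilon n)\le 2\exp(-c\varepsilon^2 n)\to0$ for every $\varepsilon>0$, which is precisely $N_{m,l}(n)/n-\EE[N_{m,l}(n)]/n\convp 0$.

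The main obstacle is the uniform control needed in step (I): the pointwise limit $\PP\bigl(\bxi_{w,\delta}(A_w(n))=(m,l)\bigr)\to g_{m,l}(a(w/n))$ holds for each fixed ratio, but to replace the first moment by the integral one must pass to the limit uniformly in $w$ and dominate the boundary contributions---the ``old'' nodes $w=o(n)$, whose ages diverge so that $g_{m,l}(a(c))\to0$ as $c\downarrow0$, and the ``young'' nodes $w\approx n$ with vanishing age. Making the cancellation of $Z$ in $A_w(n)$ quantitative enough to survive the sum over $w$, and bounding these tails, is where the real work lies; the isolated $w=1$ term and the difference between normalizing by $n$ and by $n+1$ are both $O(1/n)$ and hence negligible.
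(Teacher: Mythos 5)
Your overall architecture (first moment plus concentration) is sound and the first\hyp{}moment half is essentially the paper's own argument: the paper also passes through the embedding, replaces the random age $T_n-T_{w-1}$ by the deterministic age $\log(n/w)/(1+\rho+\delta)$ (its term $A_1(n)$), converts the average over $w$ into a Riemann sum (its term $A_3(n)$), and recovers the exponential age density by the same change of variables (its term $A_4$). Where you genuinely diverge is the fluctuation step: the paper stays inside the embedded picture and exploits the \emph{independence} of the processes $\{\bxi_{w,\delta}(\cdot)\}_{w\ge 2}$ at the deterministic times, bounding a centered sum by a fourth\hyp{}moment Markov inequality and Borel--Cantelli (its term $A_2(n)$), whereas you return to the discrete PA filtration and invoke Azuma--Hoeffding for the Doob martingale $M_j=\EE[N_{m,l}(n)\mid\mathcal{G}_j]$. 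Your route is more elementary and avoids the moment computation, but the paper's route gets the independence for free from the embedding and needs no coupling.

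Two points need attention. First, your bounded\hyp{}differences claim is asserted, not proved, and it is not automatic for this model: when the resampled step $j$ flips the reciprocity coin, the two coupled processes have edge counts $|E(i)|$ differing by one for all $i>j$, so the attachment probabilities of \emph{every} vertex are perturbed by $O(1/i)$; a maximal coupling then only gives an expected number of discrepancies of order $\sum_{i>j}1/i=O(\log n)$ rather than $O(1)$. That still feeds into Azuma to give $\PP(|N_{m,l}(n)-\EE N_{m,l}(n)|>\varepsilon n)\to 0$, so the conclusion survives, but the ``universal constant'' as stated needs either the $\log$ correction or a sharper coupling. Second, the uniformity you flag in step (I) is precisely the technical content of the paper's $A_1(n)$: it is handled by discarding the nodes $w\le n\eta$ (their contribution to the mean is at most $\eta$ since each indicator is bounded by $1$), using the a.s.\ uniform convergence $\sup_{n\eta\le w\le n}|T_n-T_{w-1}-\log(n/w)/(1+\rho+\delta)|\to 0$, and controlling the probability that a non\hyp{}explosive process jumps in an $\epsilon$\hyp{}window around a deterministic time, uniformly over a compact time interval. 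You correctly identified this as the hard part but left it open; without it the ``Slutsky/continuous mapping'' step and the passage from pointwise convergence of $\PP(\bxi_{w,\delta}(A_w(n))=(m,l))$ to convergence of the averaged sum are not yet proofs.
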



\begin{proof}
By the embedding results in Theorem~\ref{thm:embed_mBI}, we have
\begin{align}\label{eq:Nij_dist}
\frac{N_{m,l}(n)}{n}\stackrel{d}{=} \frac{1}{n}\sum_{w=2}^{n+1}\ind_{\left\{\bxi_{w,\delta}(T_n-T_{w-1})=(m,l)\right\}}
+ \frac{1}{n}\ind_{\left\{\bxi_{1,\delta}(T_n)=(m,l)\right\}},
\end{align}
and we see that the second term on the right hand side goes to 0 a.s. as $n\to\infty$.
Then it suffices to consider the limiting behavior of the first term in \eqref{eq:Nij_dist},
which is divided into different parts below.
\begin{align*}
&\frac{1}{n}\sum_{w=2}^{n+1}\ind_{\left\{\bxi_{w,\delta}(T_n-T_{w-1})=(m,l)\right\}}\\
&= \left(\frac{1}{n}\sum_{w=2}^{n+1}\ind_{\left\{\bxi_{w,\delta}(T_n-T_{w-1})=(m,l)\right\}}
- \frac{1}{n}\sum_{w=2}^{n+1}\ind_{\left\{\bxi_{w,\delta}\left(\frac{\log(n/w)}{1+\rho+\delta}\right)=(m,l)\right\}}
\right)\\
&\,+ 
\left(
\frac{1}{n}\sum_{w=2}^{n+1}\ind_{\left\{\bxi_{w,\delta}\left(\frac{\log(n/w)}{1+\rho+\delta}\right)=(m,l)\right\}}- \frac{1}{n}\sum_{w=2}^{n+1}\PP\left[\bxi_{w,\delta}\left(\frac{\log(n/w)}{1+\rho+\delta}\right)=(m,l)\right]
\right)\\
&\,+ \left(
\frac{1}{n}\sum_{w=2}^{n+1}\PP\left[\bxi_{w,\delta}\left(\frac{\log(n/w)}{1+\rho+\delta}\right)=(m,l)\right]\right.\\
&\left. \qquad - \int_0^1 \PP\left[\bxi_{2,\delta}\left(-\frac{\log t}{1+\rho+\delta}\right)=(m,l)\right]\dd t
\right)\\
&\, + \int_0^1 \PP\left[\bxi_{2,\delta}\left(-\frac{\log t}{1+\rho+\delta}\right)=(m,l)\right]\dd t\\
&=: A_1(n)+A_2(n)+A_3(n)+A_4.
\end{align*}
Note that by a change of variable argument, $A_4$ is identical to the right hand side of \eqref{eq:Nij},
and we now show that $A_k(n)\convp 0$ for $k=1,2$, and $A_3(n)\to 0$. 

For $A_1(n)$, we have
\begin{align}
\EE|A_1(n)|&\le \frac{1}{n}\sum_{w=2}^{n+1}\EE\left|\ind_{\{\xi^{(1)}_{w,\delta}(T_n-T_{w-1})=m\}}
-\ind_{\{\xi^{(1)}_{w,\delta}\bigl(\log(n/w)/(1+\rho+\delta)\bigr)=m\}}\right|\nonumber\\
&\quad +\frac{1}{n}\sum_{w=2}^{n+1}\EE\left|\ind_{\{\xi^{(2)}_{w,\delta}(T_n-T_{w-1})=l\}}
-\ind_{\{\xi^{(2)}_{w,\delta}\bigl(\log(n/w)/(1+\rho+\delta)\bigr)=l\}}\right|.
\label{eq:A1_bound}
\end{align}
Since both $\xi^{(1)}_{w,\delta}(\cdot)$ and $\xi^{(2)}_{w,\delta}(\cdot)$ are non-explosive, i.e.
 have finite number of jumps in any finite time interval $[0,K]$ a.s., then 
 Lemma~3.1 in \cite{athreya:ghosh:sethuraman:2008} implies that for all $K>0$ and $w\ge 2$,
 \[
 \lim_{\epsilon\downarrow 0}\sup_{t\in [0,K]} \PP\left(\xi^{(i)}_{w,\delta}(t+\epsilon)-
 \xi^{(i)}_{w,\delta}\bigl((t-\epsilon)\wedge 0)\bigr)\ge 1\right)=0,\qquad i=1,2.
 \]
 Also, by \cite[Corollary 2.1(iii)]{athreya:ghosh:sethuraman:2008}, we see that for $\eta>0$,
 \[
 \sup_{n\eta \le w\le n}\left|T_n-T_{w-1}-\frac{1}{1+\rho+\delta}\log(n/w)\right|\convas 0.
 \]
 Then using the proof machinery for \cite[Theorem 1.2, pp 489--490]{athreya:ghosh:sethuraman:2008}, we have
 \begin{align*}
 &\left|\ind_{\{\xi^{(1)}_{w,\delta}(T_n-T_{w-1})=m\}}
-\ind_{\{\xi^{(1)}_{w,\delta}\bigl(\log(n/w)/(1+\rho+\delta)\bigr)=l\}}\right|\\
&\le \sup_{t\in [0,-\log\eta/(1+\rho+\delta)]}\PP\left(\xi^{(1)}_{2,\delta}(t+\epsilon)-
 \xi^{(1)}_{2,\delta}\bigl((t-\epsilon)\wedge 0)\bigr)\ge 1\right)\\
&\quad +\PP\left(\sup_{n\eta \le w\le n}\left|T_n-T_{w-1}-\frac{1}{1+\rho+\delta}\log(n/w)\right|\ge \epsilon\right)=: p_1(\epsilon,\eta).
 \end{align*}
 Similarly,
 \begin{align*}
 &\left|\ind_{\{\xi^{(2)}_{w,\delta}(T_n-T_{w-1})=l\}}
-\ind_{\{\xi^{(2)}_{w,\delta}\bigl(\log(n/w)/(1+\rho+\delta)\bigr)=l\}}\right|\\
&\le \sup_{t\in [0,-\log\eta/(1+\rho+\delta)]}\PP\left(\xi^{(2)}_{2,\delta}(t+\epsilon)-
 \xi^{(2)}_{2,\delta}\bigl((t-\epsilon)\wedge 0)\bigr)\ge 1\right)\\
&\quad +\PP\left(\sup_{n\eta \le w\le n}\left|T_n-T_{w-1}-\frac{1}{1+\rho+\delta}\log(n/w)\right|\ge \epsilon\right)=: p_2(\epsilon,\eta).
 \end{align*}
 Then by \eqref{eq:A1_bound}, we see that
 \begin{align*}
 \EE|A_1(n)|&\le 2\cdot\frac{1}{n}\cdot n\eta+\frac{1}{n}(1-\eta)n\bigl(p_1(\epsilon,\eta)+p_2(\epsilon,\eta)\bigr),
 \end{align*}
 which implies $\lim_{n\to\infty}\EE|A_1(n)|=0$. Therefore, $A_1(n)\convp 0$.
 
 For $A_2(n)$, we use Markov's inequality to obtain
 \begin{align}
& \PP(|A_2(n)|>\epsilon)\nonumber\\
 \le& \frac{1}{n^4\epsilon^4}
 \EE\left[\sum_{w=2}^{n+1} \left(\ind_{\left\{\bxi_{w,\delta}\left(\frac{\log(n/w)}{1+\rho+\delta}\right)=(m,l)\right\}}
 - \PP\left(\bxi_{w,\delta}\left(\frac{\log(n/w)}{1+\rho+\delta}\right)=(m,l)\right)\right)\right]^4.
 \label{eq:A2_bound}
 \end{align}
Write $X_w:= \ind_{\left\{\bxi_{w,\delta}\left(\frac{\log(n/w)}{1+\rho+\delta}\right)=(m,l)\right\}}
 - \PP\left(\bxi_{w,\delta}\left(\frac{\log(n/w)}{1+\rho+\delta}\right)=(m,l)\right)$, then $\EE(X_w)=0$.
 Also, by the independence among $\{\bxi_{w,\delta}(\cdot)\}_{w\ge 2}$, we have
 $$\EE(X_wX_u)=\EE(X_w^2X_u)=\EE(X_w^3X_u)=0,\qquad w\neq u. $$
 Then \eqref{eq:A2_bound} becomes
 \begin{align*}
& \PP(|A_2(n)|>\epsilon)\le\frac{1}{n^4\epsilon^4}\left[\sum_{w=2}^{n+1} X_w\right]^4\\
 =&\frac{1}{n^4\epsilon^4} \EE\left[\sum_{w} X_w^4+4\sum_{w_1\neq w_2}X_{w_1}^3 X_{w_2}
 + 3\sum_{w_1\neq w_2}X_{w_1}^2 X_{w_2}^2 
 \right.\\
 &\left.\qquad
 +6\sum_{w_1\neq w_2\neq w_3}X_{w_1}^2 X_{w_2} X_{w_3}
 + \sum_{w_1\neq w_2\neq w_3\neq w_4}X_{w_1} X_{w_2}X_{w_3}X_{w_4}\right]\\
 =& \frac{1}{n^4\epsilon^4} \EE\left[\sum_{w} X_w^4
 + 3\sum_{w_1\neq w_2}X_{w_1}^2 X_{w_2}^2 \right]
 \le \frac{1}{n	^3\epsilon^4}+\frac{1}{n^2\epsilon^4}.
 \end{align*}
 Then by the Borel-Cantelli lemma, we have $A_2(n)\convas 0$.
 For $A_3(n)$, since the function $\PP[\bxi_{2,\delta}(t)=(m,l)]$ is bounded and continuous in $t$, 
 then $A_3(n)\to 0$ by the Riemann integrability of $\PP[\bxi_{2,\delta}(-\log t/(1+\rho+\delta))=(m,l)]$,
 thus completing the proof of \eqref{eq:Nij}.

\end{proof}

\medskip

Based on the limiting joint distribution in Theorem~\ref{thm:limitNij}, we further study the asymptotic behavior of $(\mathcal{I},\mathcal{O})$,
and the next theorem shows that $(\mathcal{I},\mathcal{O})$ are jointly regularly varying. 

\begin{Theorem}\label{thm:IO_mrv}
Let $(\mathcal{I},\mathcal{O})$ be as in \eqref{eq:Nij}.
If $\lambda_1\ge \log2$, then
\begin{align}
\label{eq:MRV_IO}
(\mathcal{I},\mathcal{O}) \in \text{MRV}\left(\frac{1+\rho+\delta}{\lambda_1}, t^{\lambda_1/(1+\rho+\delta)}, \mu, \mathbb{R}_+^2\setminus\{\origin\}\right),
\end{align}
where the 
limit measure $\mu \in \mathbb{M} (\R_+^2\setminus \{\bzero\})$
satisfies for any $f\in \mathcal{C} (\R_+^2\setminus \{\bzero\})$,
\begin{align}\label{eq:limit_mu}
  \mu(f) =\int_0^\infty \EE \bigl( f(y\widetilde L
  \bv) \bigr)\nu_{(1+\rho+\delta)/\lambda_1} (\dd y),
\end{align}
and $\widetilde{L}$ satisfies
$e^{-\lambda_1t}\widetilde{\bxi}_\delta(t)\convas \widetilde{L}\bv$.
Also, $(v_1,v_2)^T\equiv \bv$ is given in \eqref{eq:defv}. Since
$\widetilde L$ is one-dimensional and $\bv$ is deterministic, the
distribution of $\widetilde L \bv$ concentrates on a one-dimensional
subspace and therefore
$\mu(\cdot)$ concentrates Pareto mass on the line $y=ax$ where
\begin{equation}\label{e:defa}
  a=\frac{v_2y\widetilde L}{v_1y\widetilde L} =\frac{v_2}{v_1} =
\begin{cases}
  \frac{  \gamma-\alpha+\sqrt \Delta}{2\gamma\rho},&\text{ if }\alpha<1,\\
  \rho,& \text{ if } \alpha=1.
\end{cases}
\end{equation}
In addition, there exists some constant $C>0$ such that
$\mu\bigl((x,\infty)\times [0,\infty)\bigr) = C
x^{-(1+\rho+\delta)/\lambda_1}$, $x>0$.

Switching to $L_1$-polar coordinates via the transformation
$$T:(x,y)\mapsto \Bigl(\frac{(x,y)}{x+y}, (x+y) \Bigr) =: (\btheta, r)$$
from $\R_+^2 \setminus \{\bzero\} \mapsto  \{(x,y) \in
\R_+^2\setminus \{\bzero\} : x+y=1\} \times
(0,\infty)=:\aleph_0\times (0,\infty)$, we find with
$$\btheta_0=\Bigl( \frac{v_1}{v_1+v_2},\frac{v_2}{v_1+v_2} \Bigr)$$
that
$$\mu\circ T^{-1} (\dd\btheta, \dd r)=
\epsilon_{\btheta_0} (d\btheta) \widetilde{C} \nu_{(1+\rho+\delta)/\lambda_1}
(\dd r)$$
where $\epsilon_{\btheta_0}(\cdot)$ is the Dirac probabilty measure
concentrating all mass on $\btheta_0$ and
\[
\widetilde{C}= \int_0^\infty \frac{(1+\rho+\delta)}{\lambda_1}z^{-1-(1+\rho+\delta)/\lambda_1}
\times\PP\left(\widetilde{L}>\frac{1}{z(v_1+v_2)}\right)\dd z.
\]
\end{Theorem}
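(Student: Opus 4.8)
The plan is to read the limit law \eqref{eq:Nij} as the distribution of the branching process $\widetilde\bxi_\delta(\cdot)$ sampled at an independent exponential time, and then to deduce everything from the generalized Breiman result, Theorem~\ref{th:extendBrei}. Indeed, \eqref{eq:Nij} says exactly that $(\mathcal I,\mathcal O)\stackrel{d}{=}\widetilde\bxi_\delta(E)$, where $E$ is exponential with rate $1+\rho+\delta$ and is independent of $\{\widetilde\bxi_\delta(t):t\ge0\}$. Setting $X:=e^{\lambda_1 E}$ and $c:=(1+\rho+\delta)/\lambda_1$, a one-line computation gives $\PP(X>x)=x^{-c}$ for $x\ge1$, so $X$ is Pareto with index $c$ and scaling function $b(t)=t^{1/c}=t^{\lambda_1/(1+\rho+\delta)}$, matching the parameters claimed in \eqref{eq:MRV_IO}.

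Next I would cast $(\mathcal I,\mathcal O)$ in the product form required by Theorem~\ref{th:extendBrei}. Define, for $x\ge1$, the $\R_+^2$-valued process $\boldsymbol\zeta(x):=x^{-1}\widetilde\bxi_\delta\bigl(\lambda_1^{-1}\log x\bigr)$ (extended arbitrarily on $[0,1)$, which is irrelevant since $X\ge1$). Then $X\boldsymbol\zeta(X)=\widetilde\bxi_\delta(E)=(\mathcal I,\mathcal O)$, and writing $s=\lambda_1^{-1}\log x$ so that $\boldsymbol\zeta(x)=e^{-\lambda_1 s}\widetilde\bxi_\delta(s)$, the a.s.\ convergence $e^{-\lambda_1 t}\widetilde\bxi_\delta(t)\convas\widetilde L\bv$ stated in the theorem gives $\boldsymbol\zeta(x)\convas\widetilde L\bv=:\bxi_\infty$ as $x\to\infty$. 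Since $X=e^{\lambda_1 E}$ is a function of $E$ and $E$ is independent of the branching process, $X$ is independent of the whole trajectory $\{\boldsymbol\zeta(x):x\ge1\}$. With these identifications, Theorem~\ref{th:extendBrei}(ii) yields \eqref{eq:MRV_IO} and, by pushing $\PP(\bxi_\infty\in\cdot)\times\nu_c$ through $h(\by,x)=x\by$, the limit-measure formula \eqref{eq:limit_mu}. The degeneracy statements follow from part~(i): because $\bxi_\infty=\widetilde L\bv$ with $\widetilde L$ one-dimensional and $\bv$ the deterministic vector \eqref{eq:defv}, the mass of $\mu$ lies on the ray $\mathcal L=\{\theta\bv:\theta>0\}$, i.e.\ on the line $y=ax$ with $a=v_2/v_1$; substituting \eqref{eq:defv}, and $\bv=(1,\rho)^T$ when $\alpha=1$, reproduces \eqref{e:defa}.

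The quantitative constants I would obtain by direct integration against \eqref{eq:limit_mu}. Taking $f=\ind_{(x,\infty)\times[0,\infty)}$ gives $\mu\bigl((x,\infty)\times[0,\infty)\bigr)=\int_0^\infty\PP(y\widetilde L v_1>x)\,\nu_c(\dd y)=x^{-c}\,v_1^{c}\,\EE(\widetilde L^{c})$, so $C=v_1^{c}\EE(\widetilde L^{c})$. Passing to $L_1$-polar coordinates, every point $y\widetilde L\bv$ has angular part $\bv/(v_1+v_2)=\btheta_0$ and radius $y\widetilde L(v_1+v_2)$, so $\mu\circ T^{-1}$ factorizes as $\epsilon_{\btheta_0}(\dd\btheta)$ times a radial measure; computing the radial tail as above and changing variables via $u=1/(z(v_1+v_2))$ together with $\EE(\widetilde L^{c})=c\int_0^\infty u^{c-1}\PP(\widetilde L>u)\,\dd u$ identifies the radial part with $\widetilde C\,\nu_c(\dd r)$ for the stated $\widetilde C=(v_1+v_2)^{c}\EE(\widetilde L^{c})$. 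That the angular component is the single atom $\epsilon_{\btheta_0}$ is precisely the \emph{full asymptotic dependence} asserted.

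The main obstacle is verifying the moment hypothesis \eqref{e:extraCond}, namely $\sup_{t\ge0}\EE\bigl[\bigl(e^{-\lambda_1 t}\|\widetilde\bxi_\delta(t)\|\bigr)^{c'}\bigr]<\infty$ for some $c'>c$, and this is exactly where $\lambda_1\ge\log2$ is used and where I expect to spend most of the effort. I would bound the $c'$-th moment of the total population $\|\widetilde\bxi_\delta(t)\|$ by conditioning on the branching and immigration mechanism and controlling the growth exponent of $\EE[\|\widetilde\bxi_\delta(t)\|^{c'}]$; since each branching event at least doubles the relevant type count, the resulting exponent is governed by a factor of order $2^{c'}$, and the hypothesis $\lambda_1\ge\log2$ is what keeps this exponent below $c'\lambda_1$ for a suitable $c'$ just above $c$, so that the normalized moments stay bounded in $t$. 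Finiteness of this moment simultaneously secures $\EE(\widetilde L^{c})<\infty$, hence $0<C,\widetilde C<\infty$, which closes the argument.
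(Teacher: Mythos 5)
Your proposal is correct and follows essentially the same route as the paper: the representation $(\mathcal I,\mathcal O)\stackrel{d}{=}\widetilde\bxi_\delta(T^*)$ with $T^*$ exponential of rate $1+\rho+\delta$, the identifications $\bxi(t)=t^{-1}\widetilde\bxi_\delta(\lambda_1^{-1}\log t)$, $X=e^{\lambda_1 T^*}$, $c=(1+\rho+\delta)/\lambda_1$ fed into Theorem~\ref{th:extendBrei}, and the standard polar-coordinate bookkeeping (your explicit evaluations $C=v_1^{c}\EE(\widetilde L^{c})$ and $\widetilde C=(v_1+v_2)^{c}\EE(\widetilde L^{c})$ are consistent with the paper's formulas, which it leaves as ``standard''). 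Two small points to complete the argument: the paper first verifies $\PP(\widetilde L>0)=1$ (needed because Theorem~\ref{th:extendBrei} assumes $\bxi_\infty$ is a \emph{positive} random vector), citing Theorem~V.7.2 of Athreya--Ney; and the moment condition \eqref{e:extraCond}, which you only sketch, is proved in the paper's Section~\ref{sec:append} by induction on integer moments of the skeleton process together with a Jensen/Cauchy--Schwarz argument for the immigration sum, with $\lambda_1\ge\log 2$ entering exactly as you anticipate, through the bounded-offspring inequality $Y^{(i)}_j\le 2\le e^{\lambda_1}$.
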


Note the connection between Cartesian and polar representations of $
\mu(\cdot)$ is
$$\btheta_0=\Bigl(\frac{1}{1+a}, \frac{a}{1+a}\Bigr),$$
and when $\alpha=1$, $\btheta_0$ simplifies to
$$\btheta_0=\Bigl (\frac{1}{1+\rho}, \frac{\rho}{1+\rho}\Bigr).$$

\begin{proof}
We will prove Theorem~\ref{thm:IO_mrv} by
 applying Theorem~\ref{th:extendBrei}, so we first need to check $\PP(\widetilde{L}>0)=1$.
 By the representation in \eqref{eq:conv_bxi2}, it suffices to show that $W_0, \widetilde{W}_k, k\ge 1$ are all strictly positive a.s., which follows directly from
 Theorem~V.7.2 and Equation~(V.25) in \cite{athreya:ney:1972}. Hence, we have $\widetilde{L}>0$ a.s..
 
From \eqref{eq:Nij}, we see that
$
\big(\mathcal{I},\mathcal{O}\bigr) 
\stackrel{d}{=} \widetilde{\bxi}(T^*)$,
where $T^*$ is an exponential random variable with rate
$1+\rho+\delta$, independent from the $\widetilde{\bxi}_\delta(\cdot)$
process.  The proof of 
 \eqref{eq:MRV_IO} and \eqref{eq:limit_mu} is an application of Theorem
 \ref{th:extendBrei} after making the identifications
 \begin{align*}
&\bxi (t)=t^{-1} \widetilde{\bxi}_\delta\left(\frac{1}{\lambda_1} \log t\right), 
                                                                      &&\bxi_\infty =\widetilde{L}\bv,
   &&X=e^{\lambda_1 T^*},\\
   &b(t)=t^{\lambda_1/(1+\rho+\delta)},&&
c=(1+\rho+\delta)/\lambda_1 && {}.
 \end{align*}
 The remaining piece is to show \eqref{e:extraCond} in this
 context and we will show
any $\delta \ge 0$ and any $q=1,2,\ldots$, there exists some constant $C(\delta,q)>0$ such that 
\begin{align}\label{eq:claim_moment}
\sup_{t\ge 0}e^{-\lambda_1 qt} \EE\left[\left(\widetilde{\xi}^{(1)}_\delta(t)\right)^q\right]
\le C(\delta, q).
\end{align}
We defer the technical proof of \eqref{eq:claim_moment} to
Section~\ref{sec:append}. The comments about where $\mu(\cdot)$
concentrates and the representation of $\mu $ in polar coordinates is
standard; see, for example, \cite[p. 292]{lindskog:resnick:roy:2014}
and this 
completes the proof of Theorem~\ref{thm:IO_mrv}.
\end{proof}

When $\alpha=1$, recall the special structure in \eqref{eq:A_alpha_1}, and we have
\begin{align}
\label{eq:xi12_a1}
\widetilde{\xi}^{(2)}_{\delta}(T^*) = \sum_{k=1}^{\widetilde{\xi}^{(1)}_{\delta}(T^*)} R_k,
\end{align}
and $\{\widetilde{\xi}^{(1)}_\delta(t):t\ge 0\}$ is a single-type birth immigration process with 
$\widetilde{\xi}^{(1)}_\delta(0)=0$ and transition rate $q_{h,h+1}=h+\delta$, $h\ge 0$.
For fixed $t$, applying the distributional property of a single-type birth immigration process (cf. \cite{tavare:1987}) we have that 
\begin{align}
\label{eq:dist_xi1}
\widetilde{\xi}^{(1)}_\delta(t)\stackrel{d}{=} (1-R)Z_\delta(e^{-t}) + R\,Z_{1+\delta}(e^{-t}),
\end{align}
where $Z_\delta(p)$, $p\in [0,1]$, is a negative binomial random variable with generating function
\[
\EE(s^{Z_\delta(p)}) = (s+(1-s)/p)^{-\delta},\qquad s\in [0,1],
\]
and $R$ is a Bernoulli random variable with $\PP(R=1)=\rho$, independent from $Z_\delta(e^{-t})$,
$Z_{1+\delta}(e^{-t})$ and $T^*$.
The next corollary gives the explicit asymptotic limit of $N_{m,l}(n)/n$ for $\alpha=1$.

\begin{Corollary}\label{cor:Nij}
When $\alpha=1$,
there exist some random variables $(\mathcal{I},\mathcal{O})$ such that for $m\ge 0$ and $l\ge 1$,
\[
\frac{N_{m,l}(n)}{n}\stackrel{p}{\longrightarrow} \PP(\mathcal{I}=m, \mathcal{O}=l).
\]
In particular, the generating function of $(\mathcal{I},\mathcal{O})$ is
\begin{align}
\mathbb{E}\left(z_1^{\mathcal{I}} z_2^{\mathcal{O}}\right)
=& (1-\rho)z_2
\int_0^\infty(1+\rho+\delta)e^{-(1+\rho+\delta)t}\bigl(z_1(1-\rho+\rho u_2)\nonumber\\
&\qquad \qquad +(1-z_1(1-\rho+\rho z_2))e^t\bigr)^{-\delta}\mathrm{d}t\nonumber\\
& + \rho z_1z_2 
\int_0^\infty(1+\rho+\delta)e^{-(1+\rho+\delta)t}\bigl(z_1(1-\rho+\rho z_2)\nonumber\\
&\qquad \qquad+(1-u_1(1-\rho+\rho z_2))e^t\bigr)^{-(1+\delta)}\mathrm{d}t,\label{eq:pgfXY}
\end{align}
for $ z_1, z_2\in [0,1]$.
Moreover, 
\begin{align*}
(\mathcal{I},\mathcal{O}) \in \text{MRV}\left({1+\rho+\delta}, t^{1/(1+\rho+\delta)}, \mu_1, \mathbb{R}_+^2\setminus\{\origin\}\right),
\end{align*}
where the
limit measure $\mu_1$ satisfies for $(x,y)\in \mathbb{R}_+^2\setminus \{\origin\}$,
\begin{align}\label{eq:IO_a1}
&\mu_1\bigl((x,\infty)\times (y,\infty)\bigr)\nonumber\\
&=\int_0^\infty {(1+\rho+\delta)}z^{-(2+\rho+\delta)}
\times\PP\left({L_0}>\frac{1}{z}\max\left\{x,y/\rho\right\}\right)\dd z,
\end{align}
and the pdf of ${L_0}$ is as given in \eqref{eq:pdf_L}. 
\end{Corollary}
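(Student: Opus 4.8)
The plan is to dispatch the three assertions in turn, leaning on the general results already established and on the $\alpha=1$ simplifications \eqref{eq:xi12_a1}--\eqref{eq:dist_xi1}. The convergence $N_{m,l}(n)/n\convp\PP(\mathcal{I}=m,\mathcal{O}=l)$ is nothing more than Theorem~\ref{thm:limitNij} read in the present special case (the restriction $l\ge1$ merely records that the tracked node always carries its initial out-edge, so $\PP(\mathcal{O}=0)=0$), and so the substantive work is (2) the explicit generating function and (3) the explicit limit measure. Throughout I would use the representation $(\mathcal{I},\mathcal{O})\stackrel{d}{=}\widetilde{\bxi}_\delta(T^*)$ with $T^*$ exponential of rate $1+\rho+\delta$ and independent of the process, already extracted from \eqref{eq:Nij} in the proof of Theorem~\ref{thm:IO_mrv}.

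For the generating function I would first condition on $T^*=t$ and write
\[
\EE\left(z_1^{\mathcal{I}}z_2^{\mathcal{O}}\right)=\int_0^\infty(1+\rho+\delta)e^{-(1+\rho+\delta)t}\,\EE\left(z_1^{\widetilde{\xi}^{(1)}_\delta(t)}z_2^{\widetilde{\xi}^{(2)}_\delta(t)}\right)\dd t,
\]
reducing the problem to the inner joint pgf at a fixed time. Here one exploits the structure special to $\alpha=1$: by \eqref{eq:xi12_a1} the out-degree is a $\rho$-thinning of the in-edges, so conditionally on the type-1 trajectory the type-2 increments are independent $\mathrm{Bernoulli}(\rho)$ marks, each in-edge contributing a factor $1-\rho+\rho z_2$. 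This collapses the joint pgf into a single-variable pgf of $\widetilde{\xi}^{(1)}_\delta(t)$ evaluated at the thinned argument $s:=z_1(1-\rho+\rho z_2)$, decorated by the guaranteed initial out-edge factor $z_2$. Conditioning next on the initial reciprocity indicator $R$, the mixture \eqref{eq:dist_xi1} shows that $\widetilde{\xi}^{(1)}_\delta(t)$ is $Z_\delta(e^{-t})$ when $R=0$ (weight $1-\rho$, initial state $(0,1)$) and $Z_{1+\delta}(e^{-t})$ when $R=1$ (weight $\rho$, initial state $(1,1)$, contributing the extra $z_1$). Substituting the negative binomial pgf $\EE(s^{Z_\theta(e^{-t})})=(s+(1-s)e^{t})^{-\theta}$ and reinstating the time integral yields exactly the two-term expression \eqref{eq:pgfXY}.

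The multivariate regular variation statement is a specialization of Theorem~\ref{thm:IO_mrv}. For $\alpha=1$ the matrix \eqref{eq:A_alpha_1} gives $\lambda_1=1$ and $\bv=(1,\rho)^T$, so the index becomes $(1+\rho+\delta)/\lambda_1=1+\rho+\delta$, the scaling function $t^{1/(1+\rho+\delta)}$, and the mass concentrates on the ray $y=\rho x$. To obtain the explicit form \eqref{eq:IO_a1} I would evaluate the limit-measure formula \eqref{eq:limit_mu} on the test sets $(x,\infty)\times(y,\infty)$: writing $\widetilde{L}=L_0$ (with density \eqref{eq:pdf_L}) and using $y\widetilde{L}\bv=(y\widetilde{L},\rho y\widetilde{L})$, the event $y\widetilde{L}\bv\in(x,\infty)\times(y,\infty)$ reduces to $\widetilde{L}>\max\{x,y/\rho\}/z$ after renaming the radial variable $z$. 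Integrating against $\nu_{1+\rho+\delta}(\dd z)=(1+\rho+\delta)z^{-(2+\rho+\delta)}\dd z$ then reproduces \eqref{eq:IO_a1}.

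The only genuinely delicate step is the collapse of the joint pgf at fixed $t$: one must justify the conditional independence of the reciprocity marks along the in-edges and correctly bookkeep the initial out-edge and, in the $R=1$ case, the initial reciprocal in-edge, so as not to over- or under-count them against the negative binomial totals $Z_\delta$ and $Z_{1+\delta}$. Once \eqref{eq:xi12_a1} and \eqref{eq:dist_xi1} are invoked with this care, the remaining manipulations are routine, and parts (1) and (3) follow essentially by citation of Theorems~\ref{thm:limitNij} and~\ref{thm:IO_mrv}.
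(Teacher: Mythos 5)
Your proposal is correct and follows essentially the same route as the paper: the generating function via the thinning representation \eqref{eq:xi12_a1} together with the negative-binomial mixture \eqref{eq:dist_xi1} (conditioning on $T^*$ and on the initial reciprocity indicator $R$, with the initial out-edge and, when $R=1$, the initial in-edge bookkept separately), and the limit measure by specializing Theorem~\ref{thm:IO_mrv} with $\lambda_1=1$, $\bv=(1,\rho)^T$ and $\widetilde L=L_0$. You in fact supply more detail than the paper's two-sentence proof, and your bookkeeping correctly reproduces \eqref{eq:pgfXY} (note the symbols $u_1,u_2$ there are typos for $z_1,z_2$) and \eqref{eq:IO_a1}.
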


\begin{proof}
The generating function in \eqref{eq:pgfXY} follows from the distributional representation in 
\eqref{eq:xi12_a1} and \eqref{eq:dist_xi1}.
Also, note that
\[
e^{-t} \widetilde{\xi}^{(1)}_{\delta}(t) \convas R\Gamma_{1+\delta}+(1-R)\Gamma_\delta,
\]
where $\Gamma_{1+\delta}$ and $\Gamma_{\delta}$ are two independent Gamma random variables with pdfs
\[
\frac{x^\delta e^{-x}}{\Gamma(1+\delta)},\quad \text{and}\quad
\frac{x^{\delta-1} e^{-x}}{\Gamma(\delta)},\qquad x\ge 0,
\] respectively,
and are also independent from $R$.
Then the convergence in \eqref{eq:IO_a1} is a direct result of Theorem~\ref{thm:limitNij}.
 \end{proof}

\subsection{Asymptotic dependence}

\subsubsection{Comments and simulations on asymptotic dependence.}\label{sub:sim}
The asymptotic dependence
structure 
for the limiting random variables $(\cal I, \cal O)$ is given by the
measure $\mu(\cdot)$ in Theorem \ref{thm:IO_mrv}.
The fact that $\mu $ concentrates on the  line
$\mathcal{L}_a:=\left\{(x,y)\in (0,\infty)^2: y= ax\right\},$
shows that 
 large $(\cal I, \cal O)$ pairs are fully dependent, a situation
 described in \cite{das:resnick:2017} as {\it full asymptotic dependence\/}.

\begin{figure}[h]
\centering
\includegraphics[scale=.45]{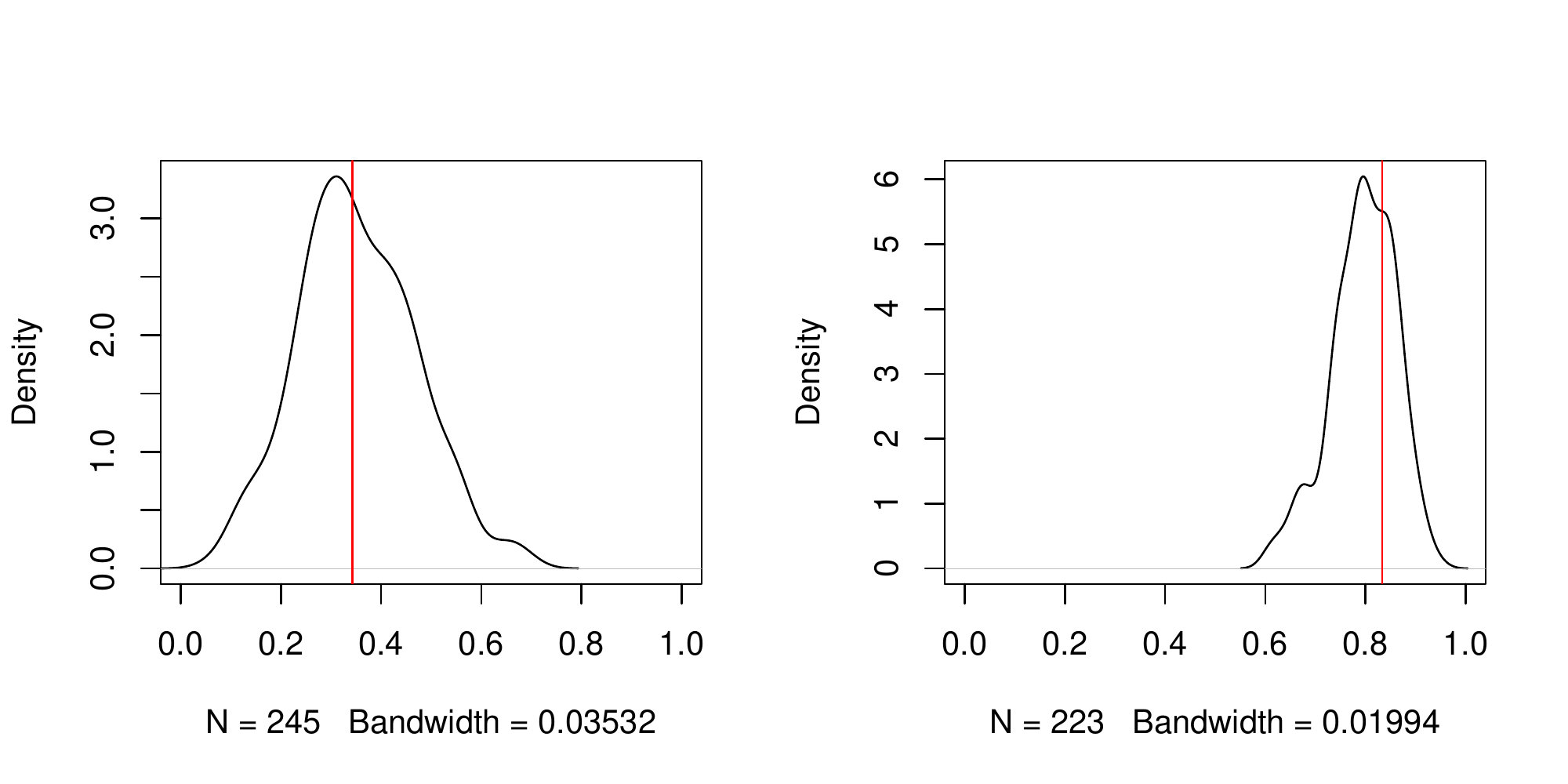}
\caption{Angular density plots for $(\alpha,\gamma,\rho, \delta,n) = (0.3,0.7,0.2, 2, 50000)$ (left) and $(1,0,2,0.2,50000)$ (right). The red vertical lines are placed at $1/(1+a)$, with $a$ defined in \eqref{e:defa}.}\label{fig:ang_b0}
\end{figure}

We illustrate  the full asymptotic dependence between in- and
out-degrees by simulation.
In Figure~\ref{fig:ang_b0}, we provide two empirical angular density plots for two sets of parameters: 
$(\alpha,\gamma,\rho, \delta,n) = (0.3,0.7,0.2, 2, 50000)$ (left) and $(1,0,0.2,2,50000)$ (right). 
The thresholds chosen in both cases are the 99.5\%-percentile of
$\{\Din_w(n)+\Dout_w(n): 1\le w\le n+1\}$. From
Figure~\ref{fig:ang_b0}, we see that both angular density plots show a
mode which is close to the red vertical lines placed at $1/(1+a)$, and
$a$ is as specified in \eqref{e:defa}. 
Comparing the two panels in Figure~\ref{fig:ang_b0}, we observe that
the $\alpha=1$ case shows
less spread about the vertical line at $a/(1+a).$

\subsubsection{Speculation on hidden regular variation.}\label{subsec:HRV}
When a limit measure
concentrates on a subcone of the full state space,
{to improve estimates of probabilities in the complement of the subcone,}
we can seek a
second {\it hidden\/} regular variation regime after removing the
subcone. In the present case, $\mu(\cdot)$ concentrates on
$\mathcal{L}_a$ and thus we may seek a regular variation property on
$\R_+^2\setminus \mathcal{L}_a$ using a weaker scaling function
$b_0(t)$ such that $t^{\lambda_1/(1+\rho+\delta)} /b_0(t) \to
\infty$. See \cite{das:resnick:2017, DasRes2015,
  resnickbook:2007,das:mitra:resnick:2013,lindskog:resnick:roy:2014}. A
convenient way to seek the hidden regular variation is by using {\it
  generalized polar coordinates\/} which in this case amount to the
  transformation
  $$\bx \to \Bigl(d_2(\bx, \EL _a), \frac{\bx}{d_2(\bx, \EL _a)}\Bigr),$$
where $d_2(\bx,\by) $ is a metric on $\R_+^2\setminus \{\bzero\}$
chosen here for convenience to be the $L_2$-metric.
The distance of a point $(x,y)$ to $\EL_a$ is readily computed
\cite[p. 881]{das:resnick:2017} to be
{$d_2 \bigl((x,y),\EL_a \bigr) ={|y-ax|}/{\sqrt{1+a^2}},$}
{and we use a scaled version }
\begin{equation}\label{e:d'}
d'_2 \bigl((x,y),\EL_a \bigr) =|y-ax|.\end{equation}
 Hidden regular
variation will be present for $(\mathcal{I},\mathcal{O}) \stackrel{d}{=} \widetilde
\bxi_\delta (T^*)$ if
$$tP\left[ \left(\frac{d'_2(\widetilde \bxi_\delta (T^*),\EL_a )}{b_0(t)},
\frac{\widetilde \bxi_\delta (T^*)}{d'_2\left(\widetilde \bxi_\delta
  (T^*),\EL_a \right)} \right) \in \cdot \,\right]
$$
converges to a limit measure in
$\mathbb{M}((0,\infty)\times \aleph_{\EL_a})$ where
$\aleph_{\EL_a} =\{\bx \in \R_+^2 \setminus \EL_a: d_2(\bx, \EL_a
)=1\}$. {Thus, considering \eqref{e:d'},} evidence consistent with this hidden regular variation is that 
$$d'_2(\widetilde \bxi_\delta (T^*),\EL_a ) =\left|\widetilde{\xi}_\delta^{(2)}(T^*) -a\widetilde{\xi}_\delta^{(1)}(T^*)\right|$$ be a random variable with a
regularly varying tail. Since $\widetilde{\bxi}_\delta (t)/e^{\lambda_1t} \convas \widetilde{L}\bv$, we have
$$ \frac{ \xi_\delta^{(2)}(t)  -a\xi_\delta^{(1)}(t) }{e^{\lambda_1t}} \convas  \widetilde{L}v_2-a \widetilde{L} v_1=0,$$
so the index of the sought hidden regular variation needs to be
smaller than the one indicated in \eqref{eq:MRV_IO}.
In what follows, we {offer} some {(incomplete) evidence this is the case.}

When $\delta=0$ and $\alpha<1$, $\widetilde{\bxi}_0(\cdot)$ becomes a two-type branching process without immigration.
Then by \eqref{eq:defA}, the second eigenvalue of $A$ is 
\[
\lambda_2=\frac{1}{2}(1-\sqrt{\Delta}),
\]
and its associated right eigenvector is $\bu'= (a,-1)^T$.
Note that Theorem~\ref{thm:limitNij} assumes $\lambda_1\ge \log 2$, i.e. $\sqrt{\Delta}\ge 2\log2-1$,
so that
$$\lambda_1-2\lambda_2=\frac{3}{2}\sqrt{\Delta}-\frac{1}{2}=3\log 2-2>0.$$
By \cite[Corollary V.8.1]{athreya:ney:1972}, one property for a 2-type branching process, $\widetilde{\bxi}_0(\cdot)$, is that if
\begin{align}
\lambda_1>2\lambda_2,
\label{eq:lambda12}
\end{align}
then 
there exists some normal random variable $V$ with mean 0 and
{some} variance $\sigma^2>0$, such that 
\begin{align}
\lim_{t\to\infty}& \PP\left(0<x_1<\widetilde{L}\le x_2<\infty,\, e^{-\lambda_1t/2}{\left|\widetilde{\xi}^{(2)}_0(t)-a\widetilde{\xi}^{(1)}_0(t)\right|}\le y\right)\nonumber\\
&= \int_{x_1}^{x_2} f_{\widetilde{L}}(z)
\PP\left(|V|\le \frac{y}{\sqrt{z}}\right)\dd z.
\label{eq:second_IO}
\end{align}
In \eqref{eq:second_IO}, the distribution of $\widetilde{L}$ {as well
as the constant $\sigma$} depend on ${\widetilde{\bxi}_0(0)}$, but for brevity of notation, we suppress the conditioning on ${\widetilde{\bxi}_0(0)}$.
Based on \eqref{eq:second_IO}, we make the following conjecture: if it were true that for $q'>2(1+\rho)/\lambda_1$,
\begin{align}
\label{eq:moment2}
\sup_{t\ge 0} e^{-\lambda_1 q' t/2} \EE\left(\left| \widetilde{\xi}^{(2)}_0(t)-a\widetilde{\xi}^{(1)}_0(t)\right|^{q'}\right) <\infty,
\end{align}
then 
\begin{align}
\label{eq:conjecture}
t\PP\Bigl(&\frac{\left|\mathcal{O}-a\mathcal{I}\right|}{t^{\lambda_1/(2(1+\rho))}}>y\Bigr)
=
  t\PP\left(\frac{\left|\widetilde{\xi}^{(2)}_0(T^*)-a\widetilde{\xi}^{(1)}_0(T^*)\right|}{t^{\lambda_1/(2(1+\rho))}}>y\right)\nonumber\\ 
& \longrightarrow \frac{1+\rho}{\lambda_1}\int_0^\infty u^{-1-(1+\rho)/\lambda_1}
\int_{0}^{\infty} f_{\widetilde{L}}(z)
\PP\left(|V|> \frac{y}{\sqrt{uz}}\right)\dd z,\qquad {(t\to\infty)}.
\end{align}

{We have not been able to prove the} moment condition
\eqref{eq:moment2}.
{However, we offer simulation evidence for \eqref{eq:conjecture}}
{and simulate} a network with parameters
\[
(\alpha,\gamma, \rho, \delta, n) = (0.2,0.8,0.25,{0},50000),
\]
so that $\lambda_1=0.816>\log 2$, and $\lambda_2 = 0.184<\lambda_1/2$.
\begin{figure}[h]
\centering
\includegraphics[scale=.4]{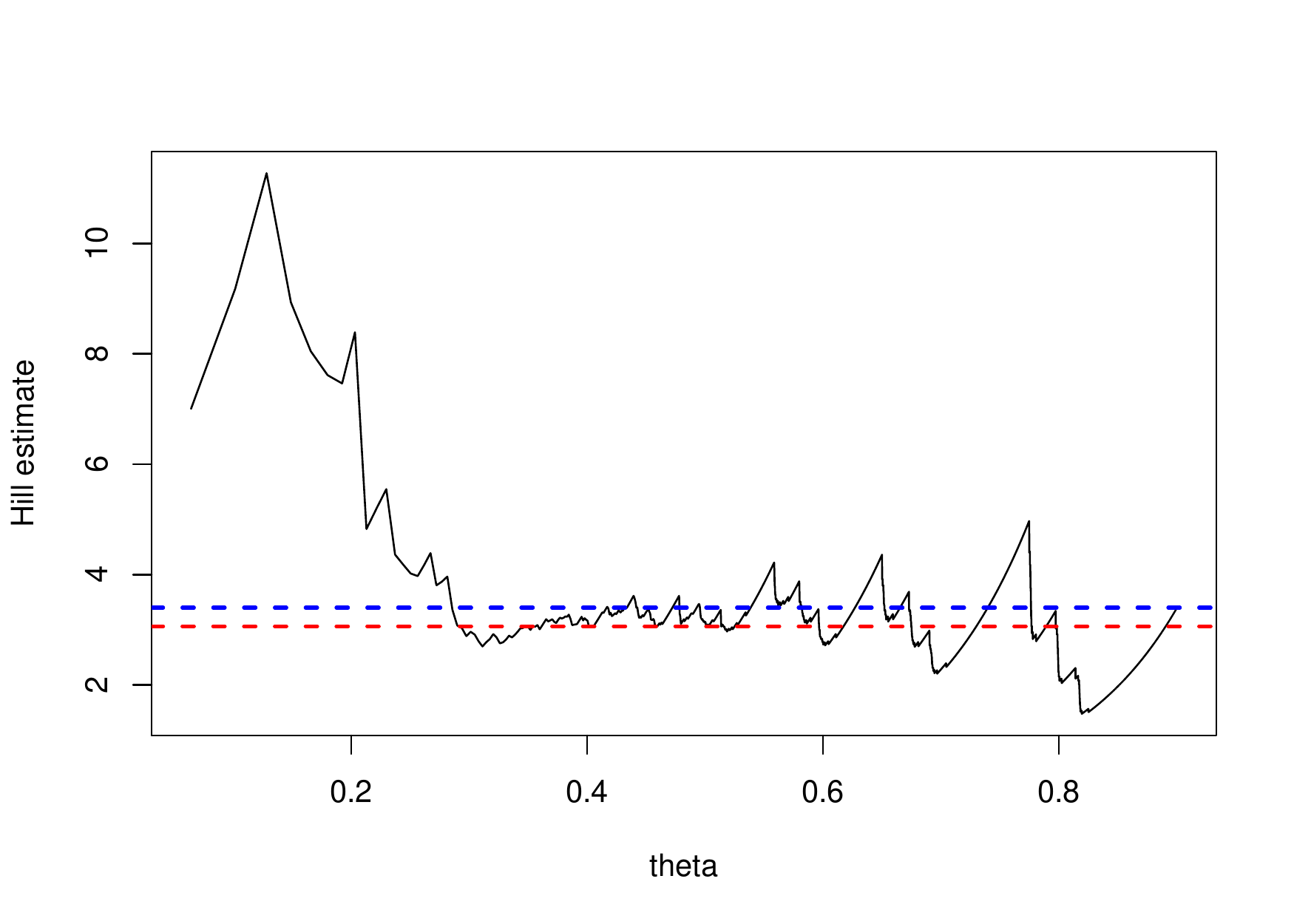}
\caption{
AltHill plot based on \eqref{pt:altHill} with $(\alpha,\gamma,\rho, \delta, n) =  (0.2,0.8,0.25,0,50000)$. The red and blue lines represent the true and estimated values of $2(1+\rho)/\lambda_1$, respectively.}\label{fig:altHill0}
\end{figure}
In Figure~\ref{fig:altHill0}, we present the altHill plot (cf. \cite{drees:dehaan:resnick:2000}) based on
\begin{align}\label{pt:altHill}
\left\{\left|a\Din_w(n)-\Dout_w(n)\right|: w\in V(n)\right\}.
\end{align}
The red dashed line is the true value of $2(1+\rho)/\lambda_1$, and
the blue line corresponds to the estimated value of $2(1+\rho)/\lambda_1$ using the minimum distance method in \cite{clauset:shalizi:newman:2009}. 
The altHill plot supports our conjecture in \eqref{eq:conjecture}.
For $\delta>0$, we have similar speculation as in \eqref{eq:conjecture}, i.e. we conjecture
$\left|\mathcal{O}-a\mathcal{I}\right|$ be regularly varying with index $2(1+\rho+\delta)/\lambda_1$.
We repeat the simulation procedure above for a different set of parameters:
\[
(\alpha,\gamma, \rho, \delta, n) = (0.2,0.8,0.25,0.5,50000).
\]
The corresponding altHill plot is given in Figure~\ref{fig:altHill1}.
The red dashed line is the true value of $2(1+\rho+\delta)/\lambda_1$, and
the blue line corresponds to the estimated value of $2(1+\rho+\delta)/\lambda_1$.
These provide
evidence for our conjecture, even when $\delta=0.5>0$.

\begin{figure}[h]
\centering
\includegraphics[scale=.4]{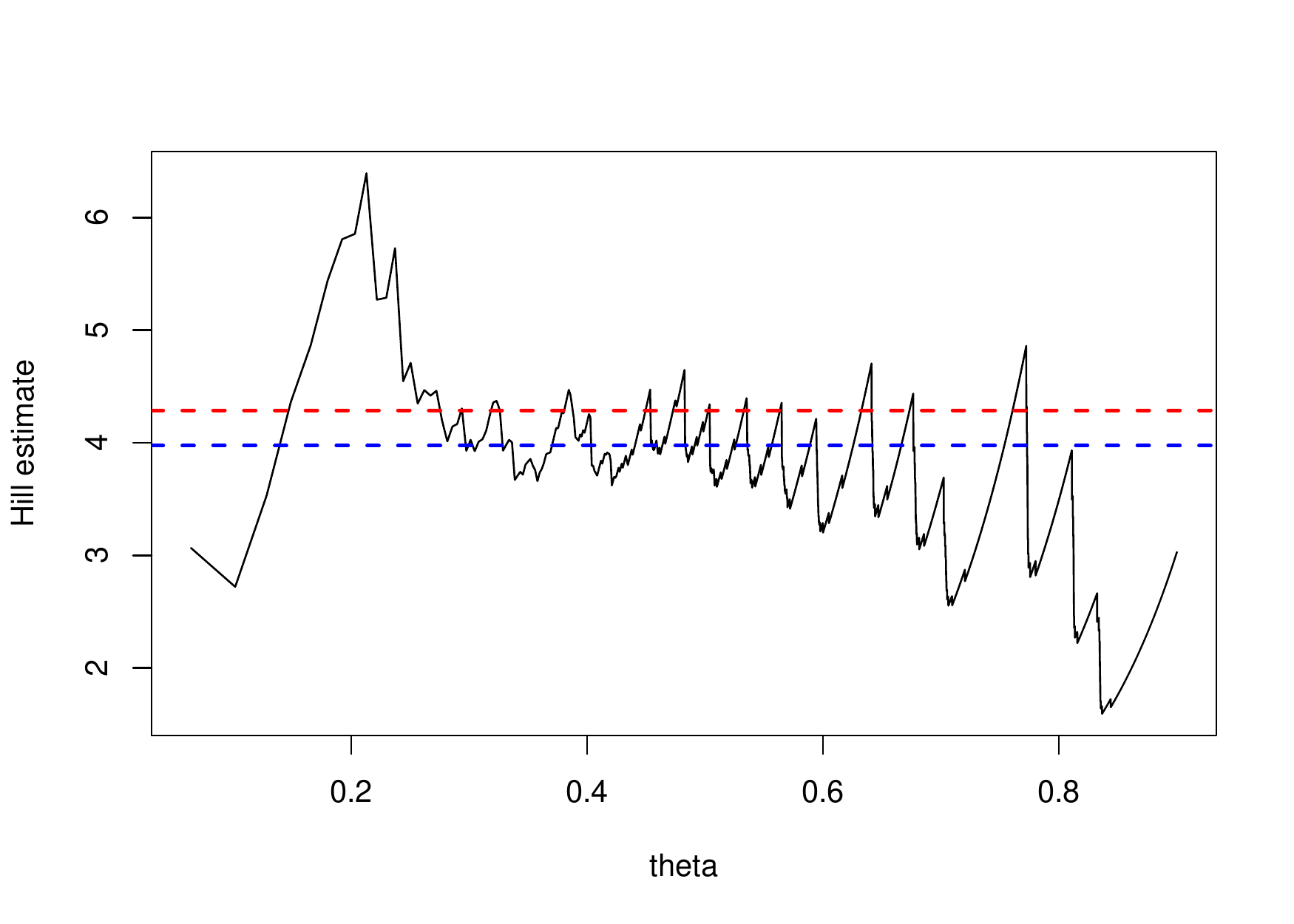}
\caption{
AltHill plot based on \eqref{pt:altHill} with $(\alpha,\gamma,\rho, \delta, n) =  (0.2,0.8,0.25,0.5,50000)$. The red and blue lines represent the true and estimated values of $2(1+\rho+\delta)/\lambda_1$, respectively.}\label{fig:altHill1}
\end{figure}

\section{Potential Extensions of the Model}\label{sec:discuss}
Now that we have studied theoretical properties of the proposed PA
model with reciprocity, this section discusses possible extensions
that we will investigate in the future.

\paragraph{Randomized $\rho$.}
So far we have assumed that each node has a common and fixed
reciprocity parameter, $\rho\in (0,1)$ but this may not reflect
realistic behavior and allow good fits to datasets.
For instance, users on Facebook (nodes) may
have their own probabilities to respond to a post on his/her wall. 
Keeping such heterogeneity in mind, we consider a randomized scenario: for each node $v\ge 1$, upon its creation, its affiliated reciprocity probability, $U_v$, is uniformly distributed on $[A,B]$. Here we also assume $0\le A\le B\le 1$, and $\{U_v:v\ge 1\}$ are iid.

\begin{figure}[h]
\centering
\includegraphics[scale=.4]{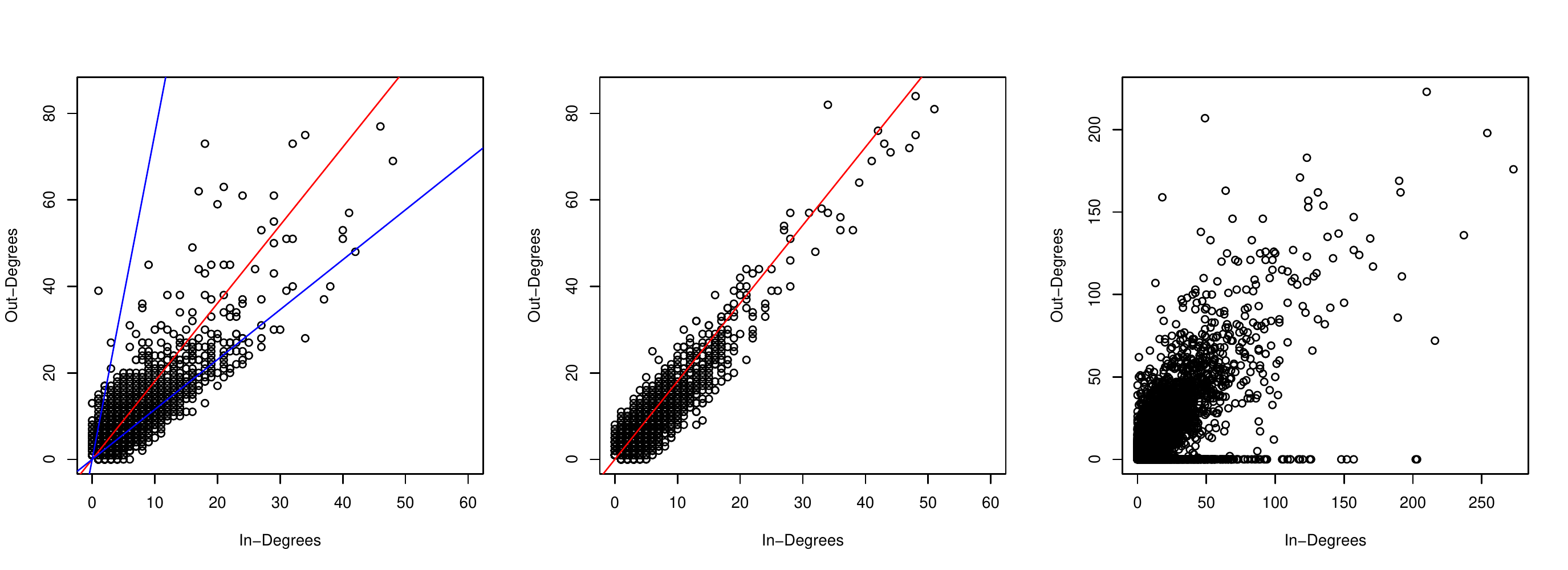}
\caption{Scatter plots of simulated in- and out-degrees from PA models with randomized (left) and fixed (middle) reciprocity, as well as those from Facebook wall posts between 2006-01-01 and 2007-02-25. We set $(\alpha,\gamma,\delta, n) =  (0.2,0.8, 0.5,10^5)$. The random reciprocity is uniform on $[0.1,0.8]$ and the fixed reciprocity $\rho=0.45$. Red lines in left and middle panels correspond to $\mathcal{L}_a= \{(x,y)\in\RR^2_+: y= 1.805 x\}$, and the two blue lines in the left panel constructs $[\text{wedge}]$ as in \eqref{eq:defw}.}\label{fig:random}
\end{figure}

We use simulation to inspect features of this PA model with randomized reciprocity. Set
$(\alpha,\gamma,\delta, n) =  (0.2,0.8, 0.5,10^5)$, and
 simulate a PA network with randomized reciprocity using 
$(A,B) =  (0.1, 0.8)$, and the scatter plot of simulated out- vs in-degrees is presented in the left panel of Figure~\ref{fig:random}.
Since in this case $\EE(U_1)=0.45$, for comparison purposes, we also include a scatter plot from a PA network with fixed reciprocity, i.e. $(\alpha,\gamma,\rho, \delta, n) =  (0.2,0.8,0.45, 0.5,10^5)$, in the middle panel of Figure~\ref{fig:random}.
When $\rho=0.45$, the slope of $\mathcal{L}_a$, is calculated from \eqref{e:defa}, i.e. $a=1.805$, and $\mathcal{L}_a$ is marked as the red line in left and middle panels of Figure~\ref{fig:random}. 
In the right panel of Figure~\ref{fig:random}, we also include the scatter plot of out- vs in-degrees from the Facebook wall posts data (available at \url{http://konect.cc/networks/facebook-wosn-wall/}) between 2006-01-01 and 2007-02-25.

The three panels in Figure~\ref{fig:random} show that a randomized reciprocity parameter tends to generate a scatter plot which matches the real data example better. 
In addition, based on the left and middle scatter plots, we see that in- and out-degrees from the model with randomized reciprocity  are less concentrating around $\mathcal{L}_a$.
The two blue lines in the left panel of Figure~\ref{fig:random} correspond to the 
concentration lines for $\rho=A,B$, whose slopes are equal to 7.533 and 2, respectively.
The left panel of Figure~\ref{fig:random} shows that large in- and out-degrees are concentrating within the wedge:
\begin{align}
\label{eq:defw}
[\text{wedge}] :=  \{(x,y)\in\RR^2_+: 1.154 x\le y\le 7.533 x\}.
\end{align}
This phenomenon coincides with the notion of \emph{strong asymptotic dependence} as discussed in \cite{das:resnick:2017}, and has been detected in real datasets(cf. \cite[Section~5]{das:resnick:2017}), but the rigorous justification under the network setup needs working out.

\paragraph{Adding edges between existing nodes.} 
The model proposed in Section~\ref{sec:model} requires adding a new node at each step, and the offset parameters are identical for the evolution of both in- and out-degrees. These assumptions are restrictive in practice.
We create an
 extended model by including the following scenario
added to cases (i) and (ii) in Section~\ref{sec:model}:
\begin{enumerate}
\item[(iii)] With probability $\beta\in (0,1)$, we add an edge $(w_1,w_2)$ between two existing nodes $w_1,w_2\in V(n)$, 
with probability
\begin{align*}
&\frac{\Din_{w_2}(n)+\delta}{\sum_{w_2\in V(n)} (\Din_{w_2}(n)+\delta)}\frac{\Dout_{w_1}(n)+\delta}{\sum_{w_1\in V(n)} (\Dout_{w_1}(n)+\delta)}\\
&= \frac{\Din_{w_2}(n)+\delta}{|E(n)|+\deltain |V(n)|}\frac{\Dout_{w_1}(n)+\delta}{|E(n)|+\delta|V(n)|}. 
\end{align*}
Then with probability $\rho\in (0,1)$, we add a reciprocal edge $(w_2,w_1)$. We then update 
the edge set as
$E(n+1) = E(n)\cup \{(w_1,w_2), (w_2,w_1)\}$. 
If the reciprocal edge is not created, then $E(n+1) = E(n)\cup \{(w_1,w_2)\}$. 
\end{enumerate}
With this third scenario, both the number of nodes and of edges in $G(n)$
are
random. 
The $\beta$-scenario  makes the model more realistic since a large proportion of edges are created between two existing nodes in real datasets (e.g. those listed on KONECT \cite{kunegis:2013}). However,
the additional scenario does not fit into the embedding framework directly, as it may simultaneously increase the in- and out-degrees of two different nodes.
In addition, the HRV conjecture in Section~\ref{subsec:HRV} is based on the condition that $\lambda_1>2\lambda_2$, which is a direct result from the $\lambda_1\ge \log 2$ assumption in Theorem~\ref{thm:limitNij}. With the additional $\beta$-scenario, it is possible to have 
$\lambda_1<2\lambda_2$, which may lead to different limiting behavior of $d'_2((\mathcal{I},\mathcal{O}),\EL_a )$.

\paragraph{Asymptotic dependence in higher dimensions.}
Another possible extension is to consider the multitype branching
structure with $K\ge 2$. For instance, when modeling the retweet or
reply network on Twitter, the follower/following relationship may also
have an impact on the generation of retweets or replies (directed
edges). Assume each node (user) to have four different degrees:
ordinary in- and out-degrees, together with the numbers of followers
and accounts one is following. Then an embedding framework using
four-type branching processes may provide a resolution to uncover the
asymptotic dependence structure between retweeting degrees and
friendship degrees.

We will explore further in all these directions in our future research. 

\section{Proofs}\label{sec:proof}

\subsection{Proof of Theorem~\ref{thm:mBI}}\label{sec:append_thm}
By \cite[Theorem~V.8.1]{athreya:ney:1972}, we see that
$$\left\{e^{-\lambda_t}\bu\cdot \bPsi_k(t): t\ge 0\right\}$$
is a martingale with respect to the filtration
$\sigma\left(\bPsi_k(s):0\le s\le t\right)$.
Also, discussion in Chapter~V.7.4 of \cite{athreya:ney:1972} shows that for $k\ge 0$,
\[
\sup_{t\ge 0}\EE\left[\left(e^{-\lambda_1 t}\bu\cdot \bPsi_k(t)\right)^2\right]<\infty.
\]
Therefore, $\{e^{-\lambda_1 t}\bu\cdot \bxi_\theta(t): t\ge 0\}$ is a non-negative sub-martingale 
with respect to $\sigma\left(\bxi_\theta(s):0\le s\le t\right)$, and
\begin{align*}
\sup_{t\ge 0} &\EE\left(e^{-\lambda_1 t} \bu\cdot\bxi_\theta(t)\right)
= \sup_{t\ge 0} \sum_{k=0}^\infty \EE\left(e^{-\lambda_1 \tau_k}
e^{-\lambda_1 (t-\tau_k)} \bu\cdot \bPsi_k(t-\tau_k)\ind_{\{t\ge \tau_k\}}
\right)\\
&\le \sum_{k=0}^\infty \left(\EE\left(e^{-2\lambda_1 \tau_k}\right)\right)^{1/2}
\sup_{t\ge 0}\left(\EE\left[\left(e^{-\lambda_1 t}\bu\cdot \bPsi_0(t)\right)^2\right]\right)^{1/2}
<\infty.
\end{align*}
Then we conclude that 
there exists some random variable $I_0$ such that as $t\to\infty$,
\begin{align*}
e^{-\lambda_1 t} \bu\cdot\bxi_\theta(t) \convas I_0.
\end{align*}
Also, since
\begin{align*}
&\left(e^{-\lambda_1 t} \bu\cdot\bxi_\theta(t)\right)^2\\
=& \left(\sum_{k=0}^\infty e^{-\lambda_1\tau_k}
e^{-\lambda_1 (t-\tau_k)} \bu\cdot \bPsi_k(t-\tau_k)\ind_{\{t\ge \tau_k\}}
 \right)^2\\
 \le& \left(\sum_{k=0}^\infty e^{-\lambda_1\tau_k}\right)
 \left(\sum_{k=0}^\infty e^{-\lambda_1\tau_k}\left(e^{-\lambda_1 (t-\tau_k)} \bu\cdot \bPsi_k(t-\tau_k)\ind_{\{t\ge \tau_k\}}\right)^2 \right),
\end{align*}
then 
\begin{align*}
&\EE\left[\left(e^{-\lambda_1 t} \bu\cdot\bxi_\theta(t)\right)^2\right]\\
\le& \sup_{t\ge 0}\EE\left[\left(e^{-\lambda_1 t}\bu\cdot \bPsi_0(t)\right)^2\right]
\EE\left[\sum_{k=0}^\infty e^{-\lambda_1\tau_k}
 \left(\sum_{k=0}^\infty e^{-\lambda_1\tau_k}\right)\right]\\
 \le& \sup_{t\ge 0}\EE\left[\left(e^{-\lambda_1 t}\bu\cdot \bPsi_0(t)\right)^2\right]
 \sum_{k=0}^\infty \left(\EE\left(e^{-2\lambda_1 \tau_k}\right)\right)^{1/2}
 \left(\EE\left[\left(\sum_{k=0}^\infty e^{-\lambda_1\tau_k}\right)^2\right]\right)^{1/2}\\
 <&\infty.
\end{align*}
Hence, we have
\begin{align*}
\EE\left(I_0\right)&= \lim_{t\to\infty} \EE\left(e^{-\lambda_1 t}\bu\cdot \bxi_\theta(t)\right)\\
&= \lim_{t\to\infty} \sum_{k=0}^\infty \EE\left(e^{-\lambda_1 \tau_k}\right)
\EE\left(e^{-\lambda_1 (t-\tau_k)} \bu\cdot \bPsi_k(t-\tau_k)\ind_{\{t\ge \tau_k\}}\right)\\
&= \sum_{k=0}^\infty \EE\left(e^{-\lambda_1 \tau_k}\right) \EE\left(W_k\right)
=\EE\left(\sum_{k=0}^\infty e^{-\lambda_1 \tau_k}W_k\right).
\end{align*}
In addition, by Fatou's lemma, we see that
a.s.
\[
I_0\ge \sum_{k=0}^\infty e^{-\lambda_1\tau_k} W_k.
\]
Then we conclude that
\[
I_0 = \sum_{k=0}^\infty e^{-\lambda_1\tau_k} W_k\qquad \text{a.s.}.
\]

Then for $i=1,\ldots, K$, we see that
\begin{align*}
e^{-\lambda_1t} \xi_\theta^{(i)}(t) 
&= e^{-\lambda_1t}\sum_{k=0}^\infty \frac{\psi_k^{(i)}(t-\tau_k) }{\bu\cdot \bPsi_k(t-\tau_k)}
\bu\cdot \bPsi_k(t-\tau_k)\ind_{\{t\ge \tau_k\}}.
\end{align*}
For fixed $M>0$, it follows from \eqref{eq:conv_bxi} that
\begin{align*}
e^{-\lambda_1t}\sum_{k=0}^M \frac{\psi_k^{(i)}(t-\tau_k) }{\bu\cdot \bPsi_k(t-\tau_k)}
\bu\cdot \bPsi_k(t-\tau_k)\ind_{\{t\ge \tau_k\}}
\convas \sum_{k=0}^M e^{-\lambda_1\tau_k}W_k v_i.
\end{align*}
Also, since all entries of $\bu$ are strictly positive, we then have
\begin{align*}
\lim_{M\to\infty}\lim_{t\to\infty} & e^{-\lambda_1 t}\sum_{k=M+1}^\infty \left|\frac{\psi_k^{(i)}(t-\tau_k) }{\bu\cdot \bPsi_k(t-\tau_k)}-v_i\right|
\bu\cdot \bPsi_k(t-\tau_k)\ind_{\{t\ge \tau_k\}}\\
&\le   \left(\frac{1}{u_i}+v_i\right) \lim_{M\to\infty}\lim_{t\to\infty}\sum_{k=M+1}^\infty
 e^{-\lambda_1 t}\bu\cdot \bPsi_k(t-\tau_k)\ind_{\{t\ge \tau_k\}} \stackrel{a.s.}{=} 0,
\end{align*}
which completes the proof of the theorem.

\subsection{Proof of Theorem~\ref{th:extendBrei}}\label{subsec:pf_breim}
To show (i) 
in $\mathbb{M}\left(\RR_+^p\times (\R_+\setminus \{0\})\right)$,
take
$f\in \mathcal{C}\bigl(\mathbb{R}_+^2\times (\R_+\setminus \{0\} )\bigr)$, which is bounded, uniformly continuous, and satisfies that for some $\varepsilon_0>0$,
$f(\bx, y) =0$ if $y<\varepsilon_0$. Then to prove
\eqref{e:beforeMult},
we  show 
\begin{align}
\label{eq:limit_IO}
\lim_{t\to\infty}&t\EE\left[f\left({{\bxi}(X)},
\frac{X}{b(t)}
                   \right)\right] =
                   \int_{\varepsilon_0}^\infty
                   \EE\left[f\bigl(\bxi_\infty, y\bigr)\right]
                   \nu_{c}(\dd y).
\end{align}
Rewrite the left hand side of \eqref{eq:limit_IO} as
\begin{align*}
& t \EE\left[f\left({\bxi(X)},
\frac{X}{b(t)}
\right)\right]
-t \EE\left[f\left(\bxi_\infty,
\frac{X}{b(t)}
\right)\right]\\
& \quad + t \EE\left[f\left(\bxi_\infty,
\frac{X}{b(t)}
\right)\right] - \int_{\varepsilon_0}^\infty
                   \EE\left[f\bigl(\bxi_\infty, y\bigr)\right]
                   \nu_{c}(\dd y)\\
& \quad + \int_{\varepsilon_0}^\infty \EE\left[f\bigl(\bxi_\infty,
                                      y\bigr)\right]
                                      \nu_{c}(\dd y)\\
&=: I_t+II_t + \int_{\varepsilon_0}^\infty
                                                         \EE\left[f\bigl(\bxi_\infty, y\bigr)\right]
                                                         \nu_{c}(\dd y),
\end{align*}
and we will show that both $|I_t|$ and $|II_t|$ go to 0 as $t\to\infty$.

For $II_t$, 
we see that
\begin{align*}
t &\EE\left[f\left(\bxi_\infty,
\frac{X}{b(t)}
\right)\right]
= \int_{\varepsilon_0}^\infty f\left(\bx,y\right) \PP\left(\bxi_\infty\in \dd \bx\right)
\times t\PP\left(\frac{X}{b(t)}\in \dd y\right).
\end{align*}
For any fixed $\bx\in \mathbb{R}_+^2$, $f(\bx, \cdot)\in
\mathcal{C}(\R_+\setminus \{0\})$ so that as $t\to\infty$,
\[
\int_{\varepsilon_0}^\infty f(\bx,y)\times t\PP\left(\frac{X}{b(t)}\in \dd y\right)\longrightarrow 
\int_{\varepsilon_0}^\infty f(\bx,y) \nu_{c}(\dd y).
\]
Since $f$ is bounded,
there exists some constant $K_0>0$ such that
$$\sup_{\bx\in \mathbb{R}_+^2, y\ge\varepsilon_0}f(\bx, y)\le K_0,$$ and
for all $t$,
\begin{align*}
\sup_{\bx\in \mathbb{R}_+^2, y\ge\varepsilon_0}&
\int_{\varepsilon_0}^\infty f(\bx, y)\times
                                                 t\PP\left(\frac{X}{b(t)}\in \dd y\right)\\ 
&\le K_0 \times t\PP\left(\frac{X}{b(t)}>\varepsilon_0\right)
\longrightarrow K_0 \varepsilon_0^{-c}<\infty.
\end{align*}
Then by dominated convergence, we see that $|II_t|\to 0$, as $t\to\infty$.

Now we consider the first term $I_t$. Since
$\bxi (t)  \to \bxi_\infty$ almost surely,  by Egorov's
Theorem, for any
$\varepsilon>0$, there exists an $\omega$-set $A_\varepsilon$ such
that 
\[
\sup_{\omega\in A_\varepsilon} \left|{\bxi}(t,\omega)-\bxi_\infty(\omega)\right|\longrightarrow 0,
\]
and $\PP(A_\varepsilon^c)<\varepsilon$.
Then we bound $|I_t|$ by
\begin{align}
|I_t|&\le \EE\left[\left|f\left({\bxi(X)}, \frac{X}{b(t)}\right)-
f\left(\bxi_\infty, \frac{X}{b(t)}\right)\right|\ind_{A_\varepsilon\cap\{{X}/{b(t)}\ge \varepsilon_0\}}\right]\nonumber\\
&\quad +\EE\left[\left|f\left({\bxi(X)}, \frac{X}{b(t)}\right)-
                                                                                                                            f\left(\bxi_\infty,\frac{X}{b(t)}\right)\right| \ind_{A^c_\varepsilon
 \cap X/b(t)\ge        \varepsilon_0\}}\right]\nonumber\\
&=: I^a_t+I^b_t.
\label{eq:bound_It}
\end{align}
Let $d\left((\bx, y), (\bx',y')\right)$ denote the distance between $(\bx, y)$ and $(\bx',y')$ on 
$\mathbb{R}_+^p\times \R_+\setminus \{0\}$ and recall the modulus of
continuity of $f$ given in 
\eqref{e:defModCon}.
Applying Egorov's Theorem, for a given $\eta>0$, there exists $M_0>0$ such that
\[
\sup_{t\ge M_0} \sup_{\omega\in A_\varepsilon}
\left|{\bxi}(t,\omega)-
  \bxi_\infty (\omega)\right|\le \eta.
\]
Then as long as $t$ is large enough such that $b(t)\varepsilon_0> M_0$, we have
\begin{align*}
  I^a_t &\le \Delta_f\left(\eta\right) t\PP\left(A_\varepsilon\cap
          \left\{\frac{X}{b(t)}\ge
          \varepsilon_0\right\}\right)\\ 
&\le \Delta_f\left(\eta\right) t\PP\left(\frac{X}{b(t)}\ge \varepsilon_0\right) \stackrel{t\to\infty}{\longrightarrow} \Delta_f\left(\eta\right) \varepsilon_0^{-c}
\stackrel{\eta\to 0}{\longrightarrow} 0.
\end{align*}
For $I^b_t$, we have
\begin{align*}
I^b_t &\le 2K_0\times t\PP\left(A_\varepsilon^c\cap\left\{\frac{X}{b(t)}\ge \varepsilon_0\right\}\right),\\
\intertext{and since $A^c_\varepsilon\in
  \sigma\{{\bxi}(t):t\ge 0\}$, and $X$ is
  independent of $\{{\bxi} (t):t\ge 0\}$, then} 
&= 2K_0 \PP\left(A_\varepsilon^c\right) t\PP\left(\frac{X}{b(t)}\ge
                                                   \varepsilon_0\right)
  \\
&\stackrel{t\to\infty}{\longrightarrow} 2K_0 \PP\left(A_\varepsilon^c\right)\varepsilon_0^{-c}
\le 2K_0 \varepsilon\cdot\varepsilon_0^{-c}.
\end{align*}
So  $|I_t|\to 0$, as $t\to\infty$, which completes the proof of 
\eqref{e:beforeMult}.

Now we turn to the proof of \eqref{e:prodOK} in part (ii).
Define a mapping $\chi: \RR_+^2\times (\R_+\setminus \{0\}) \mapsto
\RR_+^2\times (\R_+\setminus \{0\})$ by  
$
\chi((x,y), \sigma) = \bigl((\sigma x,\sigma y), \sigma\bigr).
$
Then by \cite[Example 3.3]{lindskog:resnick:roy:2014}, we apply $\chi$
to the convergence in \eqref{e:beforeMult} to obtain that in
$\mathbb{M}\left(\RR_+^2\times (\R_+\setminus \{0\})\right)$, 
\begin{align}
\label{eq:limit_IOchi}
 t\PP\left[\left( \frac{\bxi(X)X}{b(t)}, \frac{X}{b(t)} \right)\in\cdot\right]
\to \left(\PP\left[\bxi_\infty \in\cdot\right]\times \nu_{c}\right)\circ \chi^{-1}(\cdot).
\end{align}
Let $g\in \mathcal{C}(\RR_+^2\setminus\{\origin\})$, and
$g(\bx)=0$ if $\|\bx\| \leq \theta$ for some $\theta>0$ with
$\sup_{\bx }|g(\bx)|=:\|g\|.$ For typographical simplicity, assume $\|g\|=1.$
We can prove  \eqref{e:prodOK} by showing 
\begin{equation}\label{e:hopeTrue}
  t\EE g\Bigl( \frac{\bxi(X) X}{b(t)}\Bigr) \to
\int_0^\infty \EE
g(\bxi_\infty x) \nu_c (dx),
\end{equation}
knowing that from \eqref{eq:limit_IOchi}, for any $\eta>0$,
\begin{align}\label{e:witheta}
  t\EE& \left[g\Bigl(
\frac{\bxi(X) X}{b(t)}\Bigr)1_{[X/b(t)>\eta]}\right]\nonumber\\ 
&\to
\int_\eta^\infty \EE g\bigl(h(\bxi_\infty,x)\bigr) \nu_c(dx)
=\int_\eta^\infty \EE g(\bxi_\infty x) \nu_c(dx).
\end{align}
Differencing the left sides of \eqref{e:hopeTrue} and
\eqref{e:witheta} we must show
\begin{equation}\label{e:double}
  \lim_{\eta\to 0} \limsup_{t\to\infty}
  t\EE \left[g\Bigl(
  \frac{\bxi(X) X}{b(t)}\Bigr)\ind_{\{X/b(t)\leq\eta\}}\right]
  =0.\end{equation}
Bearing in mind the support of $g$, we have
\begin{align*}
  t\EE\left[ g\Bigl(
  \frac{\bxi(X) X}{b(t)}\Bigr)\ind_{\{X/b(t)\leq\eta\}}\right]
  \leq & t \PP \Bigl(
        \left\| \frac{\bxi(X) X}{b(t)}\right\| >\theta, \frac{X}{b(t)}
         \leq \eta \Bigr)\\
  &= t \PP \Bigl(
    \left\|      \frac{\bxi(X) X}{b(t)}\ind_{\{X/b(t)\leq \eta\}}
         \right \|>\theta \Bigr)\\
  \leq &  \theta^{-c'} t\EE \left \|   \frac{\bxi(X) X}{b(t)}\ind_{\{X/b(t)\leq \eta\}}
         \right \| ^{c'} \\
  \intertext{and recalling $\bxi(\cdot)$ and $X$ are independent, this
  is evaluated to}
  =&\theta^{-c'}\int_0^\eta \EE \left \|
{\bxi(b(t)s)}\right\|^{c'} s^{c'} tP(X/b(t) \in
     ds)\\
  \leq & \theta^{-c'}\kappa t\EE\left[\left( \frac{X}{b(t)}\right)^{c'} \ind_{\{X/b(t) \leq
  \eta\}}\right]\\
\intertext{and applying Karamata's theorem on integration, this converges as
  $t\to\infty$ }
  = & \theta^{-c'}\kappa\eta^{c'-c} \to 0 ,
    \end{align*}
as $\eta \to 0$ giving \eqref{e:double} and hence \eqref{e:hopeTrue}.

\subsection{Proof of Equation~\eqref{eq:claim_moment}}\label{sec:append}
We will now prove the claim in \eqref{eq:claim_moment} by showing a bound of the moment of a $K$-type branching process with immigration. We start with analyzing the $q$-th moment of a $K$-type branching process as given in Section~\ref{subsec:mBI}, for $q\ge 2$.
Then with the construction of a $K$-type branching process with immigration as in \eqref{eq:mBI_dist}, we derive the bound of its $q$-th moment.

\begin{Proposition}\label{prop:moment}
Let $\{\bPsi_0(t): t\ge 0\}$ be a continuous time $K$-type branching process with matrix $A$ as given in \eqref{eq:matrixA}, and $\bu$ be the right eigenvector associated with the largest eigenvalue of $A$, i.e. $\lambda_1$.
Suppose that $Y_j^{(i)}$ is the number of type $j$ particles produced by a type $i$ particle at the end of its lifetime, and 
\begin{align}\label{eq:cond_Yij}
Y_j^{(i)}\le e^{\lambda_1}, \qquad \text{for all }\quad i,j =1,\ldots,K.
\end{align}
Then for an integer $q\ge 1$, we have
\begin{align}
\label{eq:moment}
\sup_{t\ge 0}e^{-\lambda_1q t}\EE\left[\left(\bu \cdot \bPsi_0(t)\right)^q\right]<\infty.
\end{align}
\end{Proposition}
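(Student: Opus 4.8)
The plan is to establish \eqref{eq:moment} by induction on $q$, running the infinitesimal generator of the Markov chain $\bPsi_0(\cdot)$ against the test function $g(\bx)=(\bu\cdot\bx)^q$ and then integrating a linear differential inequality for $m_q(t):=\EE\bigl[(\bu\cdot\bPsi_0(t))^q\bigr]$. Since a type $i$ particle dies at rate $a_i$ and is replaced by $\boldsymbol{Y}^{(i)}=(Y^{(i)}_1,\ldots,Y^{(i)}_K)$, the generator acts on $g$ by
\[
\mathcal{A}g(\bx)=\sum_{i=1}^K a_i x_i\,\EE\Bigl[(\bu\cdot\bx+\Delta_i)^q-(\bu\cdot\bx)^q\Bigr],\qquad \Delta_i:=\bu\cdot\boldsymbol{Y}^{(i)}-u_i .
\]
Expanding the bracket binomially and separating the $k=1$ term, the first-order contribution is $q(\bu\cdot\bx)^{q-1}\sum_i a_i x_i\EE[\Delta_i]$.

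The central algebraic fact I would use is the identity $\sum_{i}a_i x_i\EE[\Delta_i]=\lambda_1(\bu\cdot\bx)$, valid for every $\bx$. This is precisely the statement that $\bx\mapsto\bu\cdot\bx$ is a $\lambda_1$-eigenfunction of the first-moment dynamics, equivalently the martingale property of $e^{-\lambda_1 t}\bu\cdot\bPsi_0(t)$ established in the proof of Theorem~\ref{thm:mBI}; it is this martingale property, rather than a direct appeal to the entrywise formula \eqref{eq:matrixA}, that I would cite. Consequently the leading term equals exactly $\lambda_1 q(\bu\cdot\bx)^q$, and the remaining task is to show the lower-order terms are genuinely of lower order.

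Here the hypothesis \eqref{eq:cond_Yij} enters: since $\bu^T\ind=1$ and $Y^{(i)}_j\le e^{\lambda_1}$, we have $0\le\bu\cdot\boldsymbol{Y}^{(i)}\le e^{\lambda_1}$, hence $|\Delta_i|\le e^{\lambda_1}=:D$ and $|\EE[\Delta_i^k]|\le D^k$. Because every coordinate of $\bu$ is strictly positive, $\sum_i a_i x_i\le (a_{\max}/u_{\min})(\bu\cdot\bx)$, so each term with $2\le k\le q$ is $O\bigl((\bu\cdot\bx)^{q-k+1}\bigr)$ and
\[
\mathcal{A}g(\bx)\le \lambda_1 q(\bu\cdot\bx)^q+\sum_{k=2}^q c_k (\bu\cdot\bx)^{q-k+1}.
\]
Taking expectations and invoking the inductive hypothesis $m_j(t)\le\kappa_j e^{\lambda_1 j t}$ for $j<q$ (the base case $q=1$ being $m_1(t)=e^{\lambda_1 t}m_1(0)$ from the martingale), the sum is at most $C'e^{\lambda_1(q-1)t}$ for $t\ge 0$, so that $m_q'(t)\le \lambda_1 q\,m_q(t)+C'e^{\lambda_1(q-1)t}$. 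Multiplying by $e^{-\lambda_1 q t}$ gives $\tfrac{d}{dt}\bigl(e^{-\lambda_1 q t}m_q(t)\bigr)\le C'e^{-\lambda_1 t}$, and integrating over $[0,\infty)$ yields $\sup_{t\ge0}e^{-\lambda_1 q t}m_q(t)\le m_q(0)+C'/\lambda_1<\infty$, which is \eqref{eq:moment}.

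The step I expect to require the most care is justifying the identity $m_q'(t)=\EE[\mathcal{A}g(\bPsi_0(t))]$, which tacitly assumes $m_q(t)<\infty$ for each fixed $t$. I would remove this circularity by localization: apply Dynkin's formula to $g$ and the stopping times $\tau_N=\inf\{t:\bu\cdot\bPsi_0(t)>N\}$ to obtain the differential inequality for the stopped moments, then let $N\to\infty$ using monotone convergence and Fatou. The required a priori finiteness of $m_q(t)$ follows from the boundedness of offspring in \eqref{eq:cond_Yij}, which forces the total particle count to be dominated by a pure-birth process with uniformly bounded jumps; such a process is non-explosive and has finite moments of all orders at each fixed $t$, supplying both the integrability needed to run Dynkin's formula and the control that makes the passage $N\to\infty$ legitimate.
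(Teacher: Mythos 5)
Your proof is correct, but it takes a genuinely different route from the paper's. The paper first reduces to the integer-time skeleton $Z_n=e^{-\lambda_1 n}(\bu\cdot\bPsi_0(n))$, uses the deterministic offspring bound to get the pathwise-type estimate $\EE^{\mathcal{F}'_n}(Z_{n+1}^q)\le Z_n^q$, and then runs an induction on $q$ through a combinatorial moment recursion (Lemma~2.1 of \cite{athreya:ghosh:sethuraman:2008}), closing the argument by summing the geometric series $\sum_n e^{-\lambda_1 kn}$. You instead stay in continuous time, hit $g(\bx)=(\bu\cdot\bx)^q$ with the generator, isolate the exact leading term $\lambda_1 q(\bu\cdot\bx)^q$ via the eigenfunction/martingale identity, and absorb the $k\ge 2$ binomial terms into lower moments using $|\Delta_i|\le e^{\lambda_1}$ and $\sum_i a_ix_i\le (a_{\max}/u_{\min})(\bu\cdot\bx)$, finishing with a Gronwall-type integration. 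Your route buys several things: it avoids the paper's terse continuous-to-discrete reduction step (the factor $e^{\lambda_1}$ relating $\sup_t$ to $\sup_n$, which the paper asserts without much justification); it is self-contained rather than leaning on an external combinatorial lemma; and it makes visible that the deterministic bound \eqref{eq:cond_Yij} is only needed to control $\EE|\Delta_i|^k$ and to guarantee non-explosion, so finite offspring moments would suffice, whereas the paper's inequality $\EE^{\mathcal{F}'_n}(Z_{n+1}^q)\le Z_n^q$ uses the almost-sure bound essentially. The cost is the extra care with Dynkin's formula and a priori integrability, which you correctly identify as the delicate step and handle by localization plus domination by a linear birth process with bounded jumps; your deliberate choice to invoke the martingale property of $e^{-\lambda_1 t}\bu\cdot\bPsi_0(t)$ rather than the entrywise formula \eqref{eq:matrixA} also sidesteps any ambiguity in the indexing convention for $A$.
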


\begin{proof}

Note that for $n\ge 0$ integer, 
$\{\bPsi_0(n): n\ge 0\}$ is a discrete time $K$-type branching process, and
we have
\begin{align*}
\sup_{t\ge 0}e^{-\lambda_1q t}\EE\left[\left(\bu \cdot \bPsi_0(t)\right)^q\right]
\le e^{\lambda_1}\sup_{n\ge 0}e^{-\lambda_1q n}\EE\left[\left(\bu \cdot \bPsi_0(n)\right)^q\right].
\end{align*}
Hence it suffices to show for some constant $C'_q>0$,
\begin{align}
\label{eq:moment1}
\sup_{n\ge 0}e^{-\lambda_1q n}\EE\left[\left(\bu \cdot \bPsi_0(n)\right)^q\right]\le C'_q.
\end{align}
The proof of \eqref{eq:moment1} for $q=1,2$ has been given in \cite[Chapter V.7.4]{athreya:ney:1972}.
Here we use induction to show \eqref{eq:moment1} also holds for $q\ge 3$. 
Suppose \eqref{eq:moment1} is true for $k=1,\ldots,q$, $q\ge 2$, and define
\[
Z_n := e^{-\lambda_1 n}\left(\bu \cdot \bPsi_0(n)\right).
\]
We need to prove that for some $C'_{q+1}>0$,
\begin{align}
\label{eq:induc}
\sup_{n\ge 0}\EE\left[\left(Z_n\right)^{q+1}\right]\le C'_{q+1}.
\end{align}

Note also that by \cite[Theorem~V.8.1]{athreya:ney:1972}, $Z_n$ is a martingale with respect to the filtration $\mathcal{F}'_n:= \sigma\{Z_k:k\le n\}$ for $n\ge 0$.
Also, under $\PP^{\mathcal{F}'_n}$, 
\begin{align}
\label{eq:Zn}
Z_{n+1} \stackrel{d}{=}e^{-\lambda_1(n+1)} \sum_{j=1}^K u_j \sum_{i=1}^K \sum_{l=1}^{\psi_0^{(i)}(n)}Y^{(i)}_{j,l},
\end{align}
then we see 
from \eqref{eq:Zn} that for $q\ge 2$,
\begin{align}
\label{eq:bound_Zn}
\EE^{\mathcal{F}'_n}\left(Z_{n+1}^q\right)
&\le e^{-\lambda_1(n+1)q} \left(e^{\lambda_1}\bu\cdot \bPsi_0(n)\right)^q
= Z_n^q.
\end{align}
Next, we apply Lemma~2.1 in \cite{athreya:ghosh:sethuraman:2008} to obtain that
\begin{align*}
&\EE^{\mathcal{F}'_n}\left(Z_{n+1}^{q+1}\right)
\le \EE^{\mathcal{F}'_n}\left(Z_{n+1}^{q}\right) \EE^{\mathcal{F}'_n}\left(Z_{n+1}\right)\\
&\,+\sum_{k=1}^q{q\choose k}\EE^{\mathcal{F}'_n}\left(Z_{n+1}^{q-k}\right) 
e^{-\lambda_1(k+1)(n+1)} \sum_{j=1}^K u_j \sum_{i=1}^K \sum_{l=1}^{\psi_0^{(i)}(n)} \EE^{\mathcal{F}'_n}\left[\left(Y^{(i)}_{j,l}\right)^{k+1}\right],\\
\intertext{which by \eqref{eq:bound_Zn} is bounded by}
&\le  Z_n^{q+1}+\sum_{k=1}^q{q\choose k} Z_n^{q-k}e^{-\lambda_1(k+1)(n+1)} 
e^{\lambda_1(k+1)}\left(\bu\cdot \bPsi_0(n)\right)\\
&= Z_n^{q+1}+\sum_{k=1}^q{q\choose k} Z_n^{q-k+1}e^{-\lambda_1(k+1)(n+1)+\lambda_1 n} 
e^{\lambda_1(k+1)}\\
&=  Z_n^{q+1}+\sum_{k=1}^q{q\choose k}
Z_n^{q-k+1} e^{-\lambda_1 k n}.
\end{align*}
Then we have 
\begin{align}
\label{eq:Z_bound}
\sup_{n\ge 0}\EE\left(Z_{n}^{q+1}\right)&\le \EE\left(Z_{0}^{q+1}\right)
+ 
\sum_{k=1}^q{q\choose k} \sum_{n=0}^\infty e^{-\lambda_1kn} \EE\left(Z_n^{q-k+1}\right).
\end{align}
By the induction hypothesis, we have that $\sup_{n\ge 0}\EE\left(Z_n^{q-k+1}\right)\le C'_{q-k+1}$, $k=1,\ldots, q$. In addition,
$\sum_{n\ge 0}e^{-\lambda_1kn} <\infty$,
then \eqref{eq:induc} follows directly from \eqref{eq:Z_bound}.
\end{proof}

Using the distribution representation in \eqref{eq:mBI_dist}, we now give the moment bound for a 
$K$-type branching process with immigration. 
\begin{Proposition}\label{prop:moment_mBI}
Denote the $K$-type branching process with immigration given in 
\eqref{eq:mBI_dist} as
$$\left\{\bxi_\theta(t)\equiv (\xi^{(1)}_\theta(t),\ldots, \xi^{(K)}_\theta(t)): t\ge 0\right\}.$$
Suppose that the condition in \eqref{eq:cond_Yij} holds, then we have
\begin{align}
\label{eq:moment_imm}
\sup_{t\ge 0}e^{-\lambda_1q t}\EE\left[\left( \xi^{(i)}_\theta(t)\right)^q\right]<\infty.
\end{align}
\end{Proposition}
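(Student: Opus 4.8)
The plan is to reduce the claim to the single-process moment bound already established in Proposition~\ref{prop:moment} and then to control the random immigration sum by Minkowski's inequality. First I would pass from the coordinate $\xi^{(i)}_\theta(t)$ to the linear functional $\bu\cdot\bxi_\theta(t)$. Since every coordinate of $\bu$ is strictly positive and $\bxi_\theta(t)$ has non-negative entries, $u_i\,\xi^{(i)}_\theta(t)\le \bu\cdot\bxi_\theta(t)$ for each $i$, so that
\[
\EE\left[\left(\xi^{(i)}_\theta(t)\right)^q\right]\le u_i^{-q}\,\EE\left[\left(\bu\cdot\bxi_\theta(t)\right)^q\right].
\]
Hence it suffices to bound $\sup_{t\ge 0}e^{-\lambda_1 q t}\EE\left[\left(\bu\cdot\bxi_\theta(t)\right)^q\right]$ uniformly in $t$.

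Using the representation \eqref{eq:mBI_dist}, I would write
\[
e^{-\lambda_1 t}\,\bu\cdot\bxi_\theta(t)=\sum_{k=0}^\infty e^{-\lambda_1\tau_k}\,M_k(t-\tau_k)\,\ind_{\{t\ge\tau_k\}},\qquad M_k(s):=e^{-\lambda_1 s}\,\bu\cdot\bPsi_k(s).
\]
Because $\{\bPsi_k\}$ are iid copies of the branching process and condition \eqref{eq:cond_Yij} holds, Proposition~\ref{prop:moment} supplies a constant $C_q$ with $\sup_{s\ge 0}\EE\left[M_k(s)^q\right]\le C_q$ for every $k$ (note $\EE[M_k(s)^q]=e^{-\lambda_1 q s}\EE[(\bu\cdot\bPsi_k(s))^q]$). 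I would then apply Minkowski's inequality in $L_q$ to the non-negative summands:
\[
\left\|\sum_{k=0}^\infty e^{-\lambda_1\tau_k}M_k(t-\tau_k)\ind_{\{t\ge\tau_k\}}\right\|_q\le \sum_{k=0}^\infty \left\|e^{-\lambda_1\tau_k}M_k(t-\tau_k)\ind_{\{t\ge\tau_k\}}\right\|_q.
\]

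For each term the crucial point is that the Poisson arrivals $\{\tau_k\}$ are independent of the branching processes $\{\bPsi_k\}$. Conditioning on $\tau_k$ and invoking the uniform bound on $M_k$ gives
\[
\EE\left[\left(e^{-\lambda_1\tau_k}M_k(t-\tau_k)\ind_{\{t\ge\tau_k\}}\right)^q\right]\le C_q\,\EE\left[e^{-\lambda_1 q\tau_k}\right]=C_q\left(\frac{\theta}{\theta+\lambda_1 q}\right)^{k},
\]
the last equality because $\tau_k$ is a sum of $k$ iid rate-$\theta$ exponentials. Writing $r:=\theta/(\theta+\lambda_1 q)<1$, the Minkowski bound becomes $C_q^{1/q}\sum_{k\ge 0}r^{k/q}=C_q^{1/q}/(1-r^{1/q})<\infty$, independently of $t$. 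Raising to the $q$-th power and combining with the first reduction yields \eqref{eq:moment_imm}.

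The main obstacle is controlling the $q$-th moment of an infinite random sum whose number of active terms $N_\theta(t)$ itself grows with $t$; the resolution is that, after the $e^{-\lambda_1 t}$ normalization, each summand carries the geometric damping factor $e^{-\lambda_1\tau_k}$, whose $L_q$ norm decays geometrically in $k$, so Minkowski's inequality collapses the sum into a convergent geometric series with a bound uniform in $t$. I would also record the base integrability $\sup_s\EE[M_k(s)^q]<\infty$ needed to invoke Proposition~\ref{prop:moment}, which holds because the immigrant initial law $p_0$ in \eqref{eq:p0r} is supported on the finite set $\{(1,0),(0,1),(1,1)\}$ and therefore has finite moments of every order.
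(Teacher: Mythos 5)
Your proof is correct, and it handles the immigration sum by a genuinely different device than the paper. The paper's proof also reduces to Proposition~\ref{prop:moment}, but it controls the series $\sum_{l} e^{-\lambda_1\tau_l}\widetilde{Z}_l(t)$ by Jensen's inequality with the normalized weights $e^{-\lambda_1\tau_l}/\sum_k e^{-\lambda_1\tau_k}$, which produces the extra factor $\bigl(\sum_k e^{-\lambda_1\tau_k}\bigr)^{q-1}$; it then needs a Cauchy--Schwarz step and a separate moment bound on $\bigl(\sum_k e^{-\lambda_1\tau_k}\bigr)^{2(q-1)}$ imported from Athreya--Ghosh--Sethuraman. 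You instead apply the $L_q$ triangle (Minkowski) inequality term by term, so the only input beyond the single-process bound $C_q$ is the Laplace transform $\EE\bigl[e^{-\lambda_1 q\tau_k}\bigr]=\bigl(\theta/(\theta+\lambda_1 q)\bigr)^k$, and the sum collapses to an explicit geometric series $C_q^{1/q}/(1-r^{1/q})$ uniform in $t$. This is shorter and more self-contained, and it makes the constant explicit; the paper's route, by keeping the weighted-average structure, stays closer to the second-moment computation already used in the proof of Theorem~\ref{thm:mBI} and to the cited reference. Your preliminary reduction from $\xi^{(i)}_\theta$ to $\bu\cdot\bxi_\theta$ via $u_i>0$ mirrors what the paper does implicitly at the level of the individual $\psi^{(i)}_{0,l}$, and your closing remark that the initial law $p_0$ has finite moments (needed for the $\EE(Z_0^{q+1})$ term in the induction of Proposition~\ref{prop:moment}) is a point the paper leaves tacit.
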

\begin{proof}
Let $\{\tau_l: l\ge 0\}$ be the jump times of a Poisson process with rate $\theta>0$, and $\tau_0\equiv 0$.
Assume that 
$$\left\{\bPsi_{0,l}(t)\equiv (\psi^{(1)}_{0,l}(t),\ldots, \psi^{(K)}_{0,l}(t)): t\ge 0\right\}_{l\ge 0}$$
 is a sequence of iid $K$-type branching process as in Proposition~\ref{prop:moment}, which is also independent from $\{\tau_l: l\ge 0\}$. Then by \eqref{eq:mBI_dist}, we construct a $K$-type branching process with immigration as 
\[
\bxi_\theta(t) =\sum_{l=0}^\infty\bPsi_{0,l}(t-\tau_l)\ind_{\{t\ge \tau_l\}}, \qquad t\ge 0.
\]
By Jensen's inequality, we see that
\begin{align}
\label{eq:jensen}
&\left(\frac{\sum_{l=0}^\infty e^{-\lambda_1(t-\tau_l)}\psi_{0,l}^{(i)}(t-\tau_l)\ind_{\{t\ge \tau_l\}}e^{-\lambda_1\tau_l}}{\sum_{l=0}^\infty e^{-\lambda_1 \tau_l}}\right)^q\nonumber\\
&\le \frac{\sum_{l=0}^\infty e^{-\lambda_1\tau_l}\left(e^{-\lambda_1(t-\tau_l)}\psi_{0,l}^{(i)}(t-\tau_l)\ind_{\{t\ge \tau_l\}}\right)^q}{\sum_{l=0}^\infty e^{-\lambda_1 \tau_l}}.
\end{align}
Write $\widetilde{Z}^{(i)}_l(t) := e^{-\lambda_1(t-\tau_l)}\psi_{0,l}^{(i)}(t-\tau_l)\ind_{\{t\ge \tau_l\}}$,
then \eqref{eq:jensen} implies
\begin{align*}
\left(\sum_{l=0}^\infty \widetilde{Z}^{(i)}_l(t) e^{-\lambda_1\tau_l}\right)^q
\le \left(\sum_{l=0}^\infty e^{-\lambda_1 \tau_l}\bigl(\widetilde{Z}^{(i)}_l(t)\bigr)^q\right)
\left(\sum_{k=0}^\infty e^{-\lambda_1\tau_k}\right)^{q-1}.
\end{align*}
Therefore,
\begin{align*}
e^{-\lambda_1q t}&\EE\left[\left( \psi^{(i)}_\theta(t)\right)^q\right]
= \EE\left[\left(\sum_{l=0}^\infty \widetilde{Z}^{(i)}_l(t) e^{-\lambda_1\tau_l}\right)^q\right]\\
&\le \EE\left[\left(\sum_{l=0}^\infty e^{-\lambda_1 \tau_l}\bigl(\widetilde{Z}^{(i)}_l(t)\bigr)^q\right)
\left(\sum_{k=0}^\infty e^{-\lambda_1\tau_k}\right)^{q-1}\right]\\
&\le \sup_{t\ge 0}\EE\left[\bigl(\psi_{0,1}^{(i)}(t)\bigr)^q\right]
\EE\left[\sum_{l=0}^\infty e^{-\lambda_1 \tau_l}\left(\sum_{k=0}^\infty e^{-\lambda_1\tau_k}\right)^{q-1}\right].
\end{align*}
By Proposition~\ref{prop:moment}, we have $\sup_{t\ge 0}\EE\left[\bigl(\psi_{0,1}^{(i)}(t)\bigr)^q\right]<\infty$, for all $i=1,\ldots,K$, then it suffices to show
\begin{align}
\label{eq:final_claim}
\sum_{l=0}^\infty\EE\left[ e^{-\lambda_1 \tau_l}\left(\sum_{k=0}^\infty e^{-\lambda_1\tau_{k}}\right)^{q-1}\right]<\infty.
\end{align}

Using Cauchy-Schwartz inequality, we have that
\begin{align*}
\EE&\left[e^{-\lambda_1 \tau_l}\left(\sum_{k=0}^\infty e^{-\lambda_1\tau_k}\right)^{q-1}\right]\\
&\le \left(\EE\left(e^{-2\lambda_1\tau_l}\right)\right)^{1/2}\left(\EE\left[\left(\sum_{k=0}^\infty e^{-\lambda_1\tau_{k}}\right)^{2(q-1)}\right]\right)^{1/2}.
\end{align*}
Since $\{\tau_l: l\ge 1\}$ are jump times from a Poisson process with rate $\theta$, then 
$\EE\left(e^{-2\lambda_1\tau_l}\right) = (\theta/(\theta+2\lambda_1))^l$.
By the reasoning in \cite[Page 485]{athreya:ghosh:sethuraman:2008}, we also have
\[
\EE\left[\left(\sum_{k=0}^\infty e^{-\lambda_1\tau_{k}}\right)^{2(q-1)}\right]
\le\Gamma(2(q-1)+1)\left(\frac{\theta+\lambda_1}{\lambda_1}\right)^{2(q-1)},
\]
which completes the proof of \eqref{eq:final_claim}.
\end{proof}

The claim in \eqref{eq:claim_moment} follows directly from Proposition~\ref{prop:moment_mBI}, since by the embedding construction, we have $Y^{(i)}_j\le 2\le e^{\lambda_1}$ for all $i,j=1,2$.


%
%

\bibliographystyle{spbasic}      
\bibliography{./bibfile_recip.bib}   

%
%

\end{document}